\documentclass[reqno]{amsart}
\newcommand{\heute}{14. August 2009}
\usepackage{color}
\usepackage{amssymb,amsfonts,verbatim}
\newcommand{\bb}[1]{{\mathbb{#1}}}
\newcommand{\be}{\begin{equation}}
\newcommand{\ee}{\end{equation}}
\newcommand{\e}{\operatorname{e}}
\newcommand{\tp}{\tilde{\psi}}
\renewcommand{\Im}{\operatorname{Im}}
\renewcommand{\Re}{\operatorname{Re}}
\newtheorem{thm}{Theorem}[section]
\newtheorem{cor}[thm]{Corollary}
\newtheorem{lemma}[thm]{Lemma}
\theoremstyle{definition}
\newtheorem{dfn}{Definition}
\newtheorem{Remark}{Remark}
\begin{document}
\title[On the inverse resonance problem]{On the inverse resonance problem for Schr\"odinger operators}
\author{Marco Marlettta, Roman Shterenberg, Rudi Weikard}
\address{MM: School of Mathematics, Cardiff University, Cardiff CF24 4AG, Wales}
\email{MarlettaM@cardiff.ac.uk}
\address{RS: Department of Mathematics, University of Alabama at Birmingham, Birmingham, AL 35226-1170, USA}%
\email{shterenb@math.uab.edu}%
\address{RW: Department of Mathematics, University of Alabama at Birmingham, Birmingham, AL 35226-1170, USA}%
\email{rudi@math.uab.edu}%

\thanks{R.S. would also like to thank Erwin Schr\"odinger Institute (ESI, Wien) for its hospitality.}

\date{\heute}

\begin{abstract}
We consider Schr\"{o}dinger operators on $[0,\infty)$ with compactly supported, possibly complex-valued potentials in $L^1([0,\infty))$. It is known (at least in the case of a real-valued potential) that the location of eigenvalues and resonances determines the potential uniquely. From the physical point of view one expects that large resonances are increasingly insignificant for the reconstruction of the potential from the data. In this paper we prove the validity of this statement, i.e., we show conditional stability for finite data. As a by-product we also obtain a uniqueness result for the inverse resonance problem for complex-valued potentials.
\end{abstract}

\subjclass[2000]{34A55, 34B24, 81U40}
\keywords{Schr\"odinger operator, inverse problem, resonances}

\maketitle

\section{Introduction}\label{section:0}
Inverse scattering theory as well as inverse spectral theory for the Schr\"odinger equation
$$-y''+qy =\lambda y$$
are classical subjects, its central tenets having been established some 60 years ago by Borg, Levinson, Gelfand and Levitan, Krein, and Marchenko\footnote{See, for example, the monographs by Levitan \cite{MR933088}, Marchenko \cite{MR897106}, or Naimark \cite{MR0262880}.}. During this time a vast body of literature on the subject has been created. A particular class of problems, not quite so well-established, are the inverse resonance problems which are formulated only for a much narrower class of potentials. However, they are highly interesting from the point of view of applications since eigenvalues and resonances are directly observable in spectrometers.

We begin by expanding somewhat on the basics of scattering theory\footnote{For more details see Chadan and Sabatier \cite{MR985100} or Newton \cite{MR1947260}.}. To be specific, the equation $-y''+qy =\lambda y$ will be considered on the half-line $[0,\infty)$ with a Dirichlet boundary condition at zero, a case of considerable importance (thanks to separation of variables) for potential scattering in three dimensions with a spherically symmetric potential. When $q$ is integrable there is a unique solution of $-y''+qy=z^2y$ which behaves asymptotically like $\e^{izx}$ as long as $z$ is in the upper half plane. This solution is called the Jost solution and is denoted by $\psi(z,\cdot)$. At least when $q$ is superexponentially decaying in the mean, i.e., when $\int_0^\infty \e^{rx}|q(x)|dx$ is finite for all positive $r$, the function $\psi(\cdot,x)$ may be extended to the complex plane as an entire function for any fixed $x\in[0,\infty)$ (see, for example, Lemma 3 in \cite{MR1994689}). If, for some $C>0$ and $p>1$, the potential $q$ satisfies $|q(x)|\leq C \e^{-x^p}$, the growth order of $\psi(\cdot,x)$ is at most $p/(p-1)$. The function $\psi(\cdot,0)$ is called the Jost function and is of central importance in scattering theory. If $z_0$ is a zero of the Jost function in the upper half plane then, due to its asymptotic behavior, the corresponding Jost solution is an eigenfunction of the Schr\"odinger operator associated with the eigenvalue $z_0^2$. The zeros of the Jost function in the lower half plane are also of physical importance. They (or rather their squares) are called resonances. Our interest in eigenvalues and resonances stems from the fact that they are fundamental objects in quantum physics with a long history, dating back to the early days of the theory when Weisskopf and Wigner \cite{56.0752.06} studied the behavior of unstable particles. Physically, eigenvalues represent states in which the particles are permanently localized. Resonances, however, correspond to quasi-stationary (metastable) states that only exist for a finite time, proportional to the inverse of the imaginary part of the resonance, and have energy proportional to the real part of the resonance. Resonances that are close to the real axis appear as bumps in the scattering cross section and  can be measured in the laboratory. For more details on resonances the reader may consult Zworski \cite{MR2000d:58051}.

If the Jost function $\psi(\cdot,0)$ is of finite growth and the location of all eigenvalues and all resonances is known, then Hadamard's factorization theorem implies that it is known up to a factor $\e^{P(z)}$ where $P$ is a polynomial. But the coefficients of $P$ are determined also since it is known that the Jost function tends to one as $z$ tends to infinity along the positive imaginary axis for any potential under consideration. The Jost function, in turn, determines directly the norms of the Jost solutions associated with eigenvalues and a quantity called the scattering phase. Marchenko's inverse scattering theorem states that the eigenvalues, the norming constants, and the scattering phase determine uniquely the potential of the Schr\"odinger equation, assuming it is real-valued and satisfies the moment condition $\int_0^\infty x|q(x)|dx<\infty$. Thus, as a corollary, we have that the location of all eigenvalues and resonances determine a real superexponentially decaying potential. To our best knowledge this observation was first publicly made by Korotyaev in 2000 (published in \cite{MR2047740}) but Zworski \cite{MR1856251} had realized (but not published) it earlier in the context of compactly supported even potentials on $\bb R$.

We shall call the problem of obtaining a potential from just the location of all eigenvalues and resonances an inverse resonance problem. Inverse resonance problems are eminently interesting in a practical sense since, as mentioned before, eigenvalues and (small) resonances are attainable in the laboratory. This is in contrast to the scattering phase which is not easily measured. But, admittedly, finding all resonances --- as the theorem requires --- is just as elusive a goal as finding the scattering phase. Therefore a fundamental question arises: \emph{What may be said about a potential when only the location of the associated eigenvalues and small resonances is known.}

The following argument shows that large resonances carry little physical information. If the potential $q$ is compactly supported in $[0,b]$, absolutely continuous on its support, and if $q(b)\neq0$ then it is well known that there are only finitely many eigenvalues and that the (roots of the) resonances $x_n+iy_n$ are asymptotically close to the curve given by
$$y=\frac{-1}{2b}\log\left(\frac{4x^2}{|q(b)|}\right).$$
This shows that the asymptotic distribution of resonances changes upon the minutest change of the potential near the right endpoint of its support and suggests that one might be able to say a good deal about the potential from knowing the location of eigenvalues and resonances of modest size and, in particular, without knowledge of the asymptotic distribution of the large ones.

Indeed, despite the  fact that the finite data inverse resonance problem (or the finite data eigenvalue problem, on compact intervals) is ill-posed, having no unique solution, many numerical methods have been developed for its solution. Without a claim to completeness we mention here Andrew \cite{MR2277530}, Brown, Samko, Knowles, and Marletta \cite{MR1964261}, Hald \cite{MR505878}, Paine \cite{MR85c:65097}, R\"ohrl \cite{MR2183664}, Rundell and Sacks \cite{MR92h:34034}. The question now is how close these solutions are to each other and, more importantly, how close they are to the actual potential. The usual way to answer this question in the numerical analysis literature is to apply the recovery algorithm to a situation in which finite spectral data were generated from some known potential. The quality of the recovery procedure is then assessed according to how closely the recovered potential approximates the original one in some norm (sometimes the ``eyeball'' norm). Mathematically this involves are large leap of faith (practically such a leap of faith may, of course, be necessary).

Surprisingly it seems that this stability problem has not received much attention (one exception is Hitrik \cite{MR1759798}). Even for the much simpler inverse eigenvalue problem on a compact interval, this question was only answered as recently as \cite{MR2158108}. In \cite{MR2348728} it was addressed for a discrete Schr\"odinger equation. Some stability results for the case of a real-valued potential are given by Korotyaev \cite{MR2104289} but these do not address the case of finite resonance data, and indeed require quite delicate knowledge of the large-resonance asymptotics which will certainly not be available if only finitely many resonances are known.

In this paper we allow the potentials to be complex-valued but assume that they lie in a ball of fixed radius in $L^1(0,1)$ and are compactly supported. We suppose that two potentials $q$ and $\tilde q$ are both known to have compact support in some fixed interval - without loss of generality, the interval $[0,1]$ - and that for some $R>0$ and for some $\epsilon>0$, their resonances and eigenvalues are $\epsilon$-close inside the disc of radius $R$ centered at zero. Outside the disc of radius $R$, the resonances and eigenvalues of $q$ and $\tilde q$ need not be close at all; no assumption on the resonances outside the disc is made. If we also assume that $q-\tilde q$ is in a ball of fixed radius in $L^p(0,1)$, $p>1$, we obtain an estimate on
\begin{equation} \label{090415.1}
\sup_{x}\left|\int_x^1 (q(t)-\tilde q(t))dt \right|
\end{equation}
in terms of $\epsilon$ and $R$ which tends to zero as $1/R$ and $\epsilon$ tend to zero. Even without the assumption that $q-\tilde q\in L^p(0,1)$ we can show that (\ref{090415.1}) tends to zero as $1/R$ and $\epsilon$ tend to zero but we have no control over the rate of convergence. Note that in the case where all eigenvalues and resonances for $q$ coincide with those for $\tilde q$ we obtain a uniqueness result, i.e., $q$ and $\tilde q$ are then equal almost everywhere. To the best of our knowledge this uniqueness result is new for complex-valued potentials.

Our bound is obtained using transformation operators, Hadamard factorization, estimates developed from Jensen's formula and some elementary facts about Fourier transforms. The bulk of the work, however, lies in the estimation of the transformation operators from the resonance data. This is based on solving a non-standard boundary value problem for the hyperbolic PDE satisfied by the transformation kernels, with data given on a characteristic line. Our main result is Theorem \ref{theorem:4} together with its corollaries \ref{C:6.2} and \ref{C:6.3}.

\section{Transformation operators}\label{section:1}
\begin{dfn}
Let $Q>0$. By $B(Q)$ we denote the set of all (complex valued) functions $q\in L^1[0,\infty)$ which
have compact support in $[0,1]$ and are such that $\int_0^1 |q| \leq Q$.
\end{dfn}
Throughout this work we shall consider Schr\"{o}dinger operators on the half-line $[0,\infty)$ in which
the potentials lie in $B(Q)$. Given any $q\in B(Q)$ and any $z\in \bb C$ we consider the Schr\"{o}dinger equation
\[ -u'' + q(x) u = z^2 u, \;\;\; x > 0. \]
Since $q$ is compactly supported in $[0,1]$, for each $z\in\bb C$ this equation has a unique solution
$u$ satisfying the condition
\[ u(z,x) = \exp(izx), \;\;\; x\geq 1. \]
This solution is called the Jost solution and, for each fixed $x\geq 0$, it is an entire function of
$z$ of growth order one. We shall denote it by $\psi(z,x)$. Note that if $\Im(z)>0$ then
$\psi(z,\cdot)\in L^2[0,\infty)$, and
so if $\psi(z,\cdot)$ happens to satisfy the Dirichlet boundary condition $\psi(z,0) = 0$ then
$z^2$ will be an eigenvalue of the Dirichlet Schr\"{o}dinger operator $H_0(q)$ defined by
\[ H_0(q)u = -u'' + q u \]
on the domain
\[ {\mathcal D}(H_0(q)) = \{ u \in L^2[0,\infty) \; | \: -u'' + q u \in L^2[0,\infty), \;\; u(0) = 0\}. \]
If, on the other hand, $\psi(z,0)=0$ and  $\Im(z)\leq0$, then $\psi(z,\cdot)\not\in L^2[0,\infty)$ and
so $\psi(z,\cdot)$ cannot be an eigenfunction of $H_0(q)$. In this case $z^2$ is called a
{\em resonance} of $H_0(q)$. Thus we have the following dichotomy
of the zeros of $z\mapsto \psi(z,0)$:
\begin{itemize}
\item If $\Im(z)>0$ and $\psi(z,0)= 0$ then $z^2$ is an eigenvalue of $H_0(q)$ with eigenfunction
 $\psi(z,\cdot)$.
\item If $\Im(z)\leq 0$ and $\psi(z,0) = 0$ then $z^2$ is a resonance of $H_0(q)$ with wave function
 $\psi(z,\cdot)$.
\end{itemize}
Where no confusion will result, we shall abbreviate $\psi(z,0)$ simply to $\psi(z)$ and call this the
Jost function.

A remarkable fact about Schr\"{o}dinger equations -- see, e.g., Levitan \cite{MR933088} -- is the
existence of an integral operator $K$, not depending on $z$, which maps Jost solutions for
one potential to Jost solutions for a different potential. Given two potentials $q_1$ and $q_2$,
let $\psi_j(z,x)$ be the Jost solution of the equation  $-u'' + q_j u = z^2 u$, $j=1,2$. Then
there exists a kernel $K$ such that
\[ \psi_2(z,x) = \psi_1(z,x) + \int_x^\infty K(x,t)\psi_1(z,t)dt. \]
Throughout the later sections of this article we shall require estimates of kernels such as $K$. Since, in our situation,
$$\int_0^{(t-x)/2} |q_2(\alpha-\beta)-q_1(\alpha+\beta)|d\beta\leq 2Q$$
one obtains (see, e.g., Theorem 3 and Lemma 1 of \cite{MR2065435}) that
\be K(x,t)=\sum_{n=0}^\infty K_n(x,t) =: K_0(x,t) + H(x,t)\label{eq:Hdef}\ee
where
$$K_0(x,t)=\frac12 \int_{(t+x)/2}^1 (q_2(s)-q_1(s))ds$$
and, for $n\in\bb N$,
\[ K_n(x,t)=\int_{(t+x)/2}^1 \int_0^{(t-x)/2} (q_2(\alpha-\beta)-q_1(\alpha+\beta))
 K_{n-1}(\alpha-\beta,\alpha+\beta) d\beta d\alpha \]
so that
$$|K_n(x,t)| \leq \frac12 \frac{(2Q)^n}{n!} \left[1-\frac{t+x}2\right]_+^n \int_{(t+x)/2}^1 |q_2(s)-q_1(s)|ds$$
and
\[ |K(x,t)|  \leq  {\displaystyle \frac12 \int_{(t+x)/2}^1 |q_2(s)-q_1(s)|ds \exp(2Q\left[1-(t+x)/2\right]_+)}.
\]
In particular,
\begin{equation} \label{090123.1}
|K(x,t)|  \leq Q\e^{2Q}
\end{equation}
and $K_n(x,t)=K(x,t)=0$ if $t+x\geq2$. Notice also that $K_n(x,x)=0$ for $n\geq 1$ and hence
\be K(x,x) = K_0(x,x) = \frac{1}{2}\int_x^1 (q_2(s)-q_1(s))ds, \label{090123.2}
\ee
which allows the difference of the potentials to be recovered from the transformation kernel $K$.

It is also shown in \cite{MR2065435} that the function $t\mapsto H_t(0,t)$ is absolutely continuous, and
\be | H_t(0,t) | \leq CQ\e^{2Q}, \label{eq:Htbound}\ee
for some constant $C$ independent of $q_1$ and $q_2$.

As explained in the introduction, we consider in this paper the problem of estimating the difference
between two potentials $q$ and $\tilde{q}$ whose resonances are close to each other, if they are
not far from the origin. In order to do this we adopt some notation for specific transformation operator kernels corresponding to different choices of $q_1$ and $q_2$ above:

\begin{itemize}
\item $K_q$ for the transformation from a potential $0$ to a potential $q$.
\item $L_q$ for the transformation from a potential $q$ to the potential $0$.
\item $K_{\tilde{q}}$ for the transformation from a potential $0$ to a potential $\tilde{q}$.
\item $B$ for the transformation from a potential $q$ to a potential $\tilde{q}$.
\end{itemize}
We denote the Jost solution for the potential $q$ by $\psi$ and for $\tilde{q}$ by $\tp$.
Thus we have
\be \psi(z,x) = \exp(izx) + \int_x^{2-x} K_{q}(x,t)\exp(izt)dt \label{eq:2} \ee
and
\be \tp(z,x) = \exp(izx) + \int_x^{2-x} K_{\tilde{q}}(x,t)\exp(izt)dt. \label{eq:t2} \ee
Correspondingly, the kernel $L_q$ maps Jost solutions for potential $q$ back to
solutions of the free problem with potential 0:
\be \exp(izx) = \psi(z,x) + \int_x^{2-x} L_{q}(x,t)\psi(z,t)dt. \label{eq:3} \ee
Into the right hand side of (\ref{eq:t2}) we insert the expression for $\exp(izx)$
from (\ref{eq:3}) to obtain
\begin{eqnarray*} \tp(z,x) & = & \psi(z,x) + \int_x^{2-x} L_q(x,t)\psi(z,t)dt + \int_x^{2-x} K_{\tilde{q}}(x,t)\psi(z,t)dt \\
 & & \\
 & + & \int_x^{2-x} ds \int_s^{2-s} K_{\tilde{q}}(x,s)L_q(s,t)\psi(z,t)dt. \end{eqnarray*}
This means that
\[ \tp(z,x) = \psi(z,x) + \int_x^{2-x} B(x,t)\psi(z,t)dt, \]
in which
\[ B(x,t) = K_{\tilde{q}}(x,t)+L_q(x,t) + \int_x^t K_{\tilde{q}}(x,s)L_q(s,t)ds. \]
This expression is standard and may be found in \cite{MR933088}.
In the special case in which $\tilde{q} = q$ we know that $B$ must be zero and $K_q$ must
be $K_{\tilde{q}}$; this yields
\[ 0 = K_q(x,t)+L_q(x,t) + \int_x^t K_q(x,s)L_q(s,t)ds.\]
In particular, this gives
\begin{equation} \label{090328.2}
B(0,t) =  K_{\tilde{q}}(0,t)-K_q(0,t) + \int_0^t (K_{\tilde{q}}(0,s)-K_q(0,s))L_q(s,t)ds.
\end{equation}
We know from (\ref{090123.1}) that the sup-norm of transformation kernel $L_q$ is bounded by a constant which depends only on $\|q\|_1$. Thus we obtain a bound on $B(0,t)$ from one on $(K_{\tilde{q}}-K_q)(0,\cdot)$. This estimate on $B(0,t)$, in turn, will eventually yield a bound for $B(x,t)$ by an iterative procedure; the difference $\tilde{q}-q$ is then found from (\ref{090123.2}), which yields
\be 2B(x,x) = \int_x^1 (\tilde{q}-q). \label{eq:12} \ee

In order to find a bound on $(K_{\tilde{q}}-K_q)(0,\cdot)$ we observe that, setting $x=0$ in
eqns. (\ref{eq:2}), (\ref{eq:t2}) and inverting the Fourier transform,
\be (K_{\tilde{q}}-K_q)(0,t) = \frac{1}{2\pi}\int_{\bb R}(\tp(z)-\psi(z))\exp(-izt)dz. \label{eq:finvert} \ee
Our first task, therefore, is to estimate $\tp(z)-\psi(z)$ for real $z$. We shall start by doing this
in the case when $\psi$ and $\tp$ have exactly the same zeros in some large disc; the resulting
bound on $|(K_{\tilde{q}}-K_q)(0,\cdot)|$ is in Theorem \ref{theorem:m1}. The case where
the zeros inside the disc are perturbed is handled in Section \ref{section:8}.

\section{Estimates on the difference of Jost functions having the same zeros in a disc of radius $R$}\label{section:2}
In this section we will derive a pointwise bound for the difference of two Jost functions in the interval $[-R^{1/3},R^{1/3}]$ under the assumption that eigenvalues and resonances in the disc of radius $R$ coincide. More precisely, we will prove the following theorem.

\begin{thm} \label{T3.1}
For any positive number $Q$ there are numbers $C>0$ and $R_0\geq\e$ so that the following statement is true for any $R\geq R_0$. If $q$ and $\tilde q$ are two potentials in $B(Q)$ and if the zeros of the associated Jost functions $\psi$ and $\tilde\psi$ coincide in the disc $|z|<R$ then
$$|\psi(z)-\tilde\psi(z)|\leq C R^{-1/3}$$
for all $z$ satisfying $-R^{1/3}\leq z\leq R^{1/3}$.
\end{thm}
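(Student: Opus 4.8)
The plan is to compare the two Jost functions through their Hadamard factorizations and to exploit the fact that all the zeros inside $|z|<R$ cancel in the quotient $\tp/\psi$, so that the only surviving discrepancy comes from the far-away zeros, which turns out to be small on the short segment $[-R^{1/3},R^{1/3}]$.

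First I would record the Paley--Wiener bounds coming from (\ref{eq:2})--(\ref{eq:t2}) and (\ref{090123.1}). Since $K_q(0,\cdot)$ and $K_{\tilde q}(0,\cdot)$ are supported in $[0,2]$ and bounded by $2Q\e^{2Q}$, one gets $|\psi(z)-1|,|\tp(z)-1|\le M\e^{2[\Im z]_-}$ with $M=2Q\e^{2Q}$; in particular $\psi$ and $\tp$ are entire of exponential type $2$, are bounded by $1+M$ on the real axis, and tend to $1$ as $z\to\pm\infty$ along $\bb R$ and as $z\to i\infty$. These bounds depend only on $Q$, which is what will ultimately make the constant $C$ depend on $Q$ alone.

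Next I would feed the type-$2$ growth into Jensen's formula to control the zeros. Averaging $\log|\psi|$ over $|z|=r$ gives a zero-counting bound $n(r)=O(r)$, uniformly over $B(Q)$, and hence the crucial tail estimate $\sum_{|z_n|\ge R}|z_n|^{-2}=O(1/R)$, with the same for $\tp$. Writing both functions in genus-one Hadamard form, $\psi(z)=C\e^{bz}\prod_n E_1(z/z_n)$ with $E_1(w)=(1-w)\e^{w}$ and similarly for $\tp$, I would form $\log(\tp/\psi)$. Because the zeros in $|z|<R$ coincide, the corresponding factors cancel \emph{exactly}, leaving
\[
\log\frac{\tp(z)}{\psi(z)}=\alpha+\beta z+\sum_{|\tilde z_n|\ge R}\log E_1\!\big(z/\tilde z_n\big)-\sum_{|z_n|\ge R}\log E_1\!\big(z/z_n\big),
\]
where $\alpha$ and $\beta$ are the accumulated constant and linear terms of the two factorizations. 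For $|z|\le R^{1/3}$ every far factor satisfies $|z/z_n|\le R^{-2/3}$, so $|\log E_1(z/z_n)|\le 2|z/z_n|^2$, and each tail sum is bounded by $2|z|^2\sum_{|z_n|\ge R}|z_n|^{-2}=O(R^{2/3}\cdot R^{-1})=O(R^{-1/3})$. The balancing of the segment length $R^{1/3}$ against this quadratic tail is precisely what produces the exponent $-1/3$.

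The main obstacle is the control of the constant $\alpha$ and the linear coefficient $\beta$, which a priori involve \emph{all} the zeros, including the non-matching ones outside the disc. Here I would use the normalization of the Jost functions: since $\psi,\tp\to1$ as $z\to i\infty$ and both have indicator diagram exactly the segment $[0,-2i]$ (a consequence of the support of the kernels in $[0,2]$, so there is no spurious exponential drift), the quotient $\tp/\psi$ tends to $1$ along the imaginary axis and has non-positive indicator in every direction. Passing to the limit $z=i\tau\to i\infty$ in the displayed identity pins $\alpha$ and $\beta$ to the far-zero data alone and forces $\alpha+\beta z=O(R^{-1/3})$ on the segment; making this step rigorous and uniform in $B(Q)$ --- in particular handling the only conditionally convergent sums $\sum 1/z_n$ --- is the delicate technical heart of the argument. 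Once $|\log(\tp/\psi)(z)|=O(R^{-1/3})$ has been established on $[-R^{1/3},R^{1/3}]$, I would conclude by writing $\tp-\psi=\psi\,(\e^{\log(\tp/\psi)}-1)$ and using $|\psi|\le 1+M$ on the real axis to obtain $|\psi(z)-\tp(z)|\le CR^{-1/3}$, as claimed.
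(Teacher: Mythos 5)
Your overall skeleton -- Paley--Wiener bounds from the transformation kernel, Jensen's formula to get $\sum_{|z_n|\ge R}|z_n|^{-2}=O(1/R)$, genus-one Hadamard factorization with exact cancellation of the matching zeros in $|z|<R$, and the quadratic tail estimate $|\log E_1(z/z_n)|\le 2|z/z_n|^2$ yielding $O(R^{-1/3})$ on the segment -- coincides with the paper's. But the step you yourself flag as ``the delicate technical heart,'' namely bounding the residual linear term $\alpha+\beta z$, is a genuine gap, and the route you sketch for it does not go through. Passing to the limit $z=i\tau\to i\infty$ in your displayed identity requires knowing what $\sum_{|z_n|\ge R}\log E_1(i\tau/z_n)$ does for large $\tau$; the estimate $|\log E_1(w)|\le 2|w|^2$ is only available for $|w|\le 1/2$, i.e. for $|z|\lesssim R$, and beyond that the tail product is \emph{not} close to $1$ (the far zeros of a Jost function accumulate near the real axis, and the product over them, stripped of the compensating factor $\e^{g(z)}$, has no controlled limit along the imaginary axis). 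The conditionally convergent sum $\sum 1/z_n$ that you mention is a symptom of the same problem, and nothing in your argument resolves it. Also, your claim that the indicator diagram is \emph{exactly} $[0,-2i]$ is false in general (it shrinks if $q$ vanishes near $x=1$), though only the inclusion is needed.

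The paper avoids the limit at $i\infty$ entirely. It compares the two factorizations on a \emph{finite} disc $\{z:|z-3i\rho|\le\rho\}$ with $\rho=R^{1/3}$, placed high enough in the upper half plane that part (3) of Lemma \ref{L3.1} gives $|\psi-1|,|\tp-1|\le\kappa/\rho$ there, while Lemma \ref{L3.2} still applies (since $|z|\le 4\rho\ll R$) and gives $|\Pi(R,z)-1|,|\tilde\Pi(R,z)-1|=O(\rho^2/R)$. Writing
$$\e^{g-\tilde g}-1=\frac{\tilde\Pi}{\Pi}\left(\frac{\psi}{\tilde\psi}-1\right)+\frac{\tilde\Pi-\Pi}{\Pi}$$
then yields $|\e^{g-\tilde g}-1|=O(\kappa/\rho+\rho^2/R)=O(R^{-1/3})$ on that disc, the choice $\rho^3=R$ being exactly the balance point. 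Finally, because $g-\tilde g$ is a polynomial of degree at most one, an elementary lemma transports a bound on $\e^{a_1z+a_0}-1$ from the disc $|z-3i\rho|\le\rho$ to the concentric-in-radius disc $|z|\le\rho$ at the cost of a fixed numerical factor. That transfer step, together with the choice of the off-axis disc, is precisely the ingredient your proposal is missing; without it (or some substitute) the linear factor is uncontrolled and the proof is incomplete.
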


This theorem will be proved at the end of the section after several lemmas have been established. The key is Hadamard's factorization theorem which says that
$$\psi(z)=z^{n_0} \e^{g(z)}\prod_{n=1}^\infty E(z/z_n)$$
where $n_0$ is a nonnegative integer, $g$ is a polynomial of degree at most one, $E(w)=(1-w)\e^w$, and the $z_n$ are nonzero complex numbers. We introduce the abbreviation
$$\Pi(R,z)=\prod_{|z_n|\geq R}^\infty E(z/z_n).$$

We begin by establishing a preliminary estimate for the Jost function.
\begin{lemma} \label{L3.1}
For every positive number $Q$ there is a positive constant $\kappa$ such that the Jost function associated with any potential $q\in B(Q)$ has the following properties.
\begin{enumerate}
  \item $|\psi(z)|\leq\kappa$ for all $z\in\bb R$.
  \item $|\psi(z)|\leq \kappa \e^{2|z|}$ for all $z\in\bb C$.
  \item If $\rho>0$ then $|\psi(z)-1|\leq\kappa/\rho$ for all $z$ in the disc $\{z:|z-3i\rho|\leq\rho\}$.
\end{enumerate}
\end{lemma}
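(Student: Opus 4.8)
The plan is to read all three bounds off the integral representation of the Jost function. Putting $x=0$ in (\ref{eq:2}) and using that $K_q(0,t)=0$ for $t\geq 2$ gives
\be
\psi(z)=1+\int_0^2 K_q(0,t)\,\e^{izt}\,dt,
\ee
while the uniform estimate (\ref{090123.1}), applied with $q_1=0$ and $q_2=q$, shows $|K_q(0,t)|\leq Q\e^{2Q}$ for every $t$. Thus in each case I would pull this constant bound out of the integral and estimate only the exponential factor $|\e^{izt}|=\e^{-t\Im z}$ over the bounded range $t\in[0,2]$. The compact support in $t$ is precisely what keeps every resulting constant dependent on $Q$ alone.

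For part (1) I would take $z$ real, so that $|\e^{izt}|=1$ and $|\psi(z)|\leq 1+\int_0^2|K_q(0,t)|\,dt\leq 1+2Q\e^{2Q}$. For part (2) I would use, for arbitrary $z\in\bb C$ and $t\in[0,2]$, the bound $|\e^{izt}|=\e^{-t\Im z}\leq\e^{2|z|}$ (since $|\Im z|\leq|z|$ and $0\leq t\leq 2$), together with $1\leq\e^{2|z|}$, to obtain $|\psi(z)|\leq(1+2Q\e^{2Q})\e^{2|z|}$. Both of these are routine once the representation is in hand.

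The one step carrying any real idea is part (3), and I would handle it by exploiting the location of the disc. Every $z$ with $|z-3i\rho|\leq\rho$ has $\Im z\geq 3\rho-\rho=2\rho>0$, so $|\e^{izt}|=\e^{-t\Im z}\leq\e^{-2\rho t}$ for all $t\geq 0$; hence
\be
|\psi(z)-1|\leq\int_0^2|K_q(0,t)|\,\e^{-2\rho t}\,dt\leq Q\e^{2Q}\int_0^\infty\e^{-2\rho t}\,dt=\frac{Q\e^{2Q}}{2\rho}.
\ee
The boundedness of $K_q$ is thereby converted into $O(1/\rho)$ decay purely through the shift of the disc into the upper half-plane. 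Finally I would set $\kappa$ equal to a single constant dominating all three, for instance $\kappa=1+2Q\e^{2Q}$ (which exceeds $Q\e^{2Q}/2$), so that the three inequalities hold simultaneously. I do not expect a genuine obstacle here; the only point that must be checked with care is the lower bound $\Im z\geq 2\rho$ on the given disc, since that is exactly what produces the decay rate asserted in (3).
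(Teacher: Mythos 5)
Your proposal is correct and follows essentially the same route as the paper: both read all three bounds off the representation $\psi(z)=1+\int_0^2 K_q(0,t)\e^{izt}dt$ with the uniform bound (\ref{090123.1}) on $K_q(0,\cdot)$, and both obtain part (3) from the positivity of $\Im z$ on the disc (the paper phrases it via $\Im z\geq |z|/2\geq\rho$ there, you via $\Im z\geq 2\rho$; either yields the stated $\kappa/\rho$).
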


\begin{proof}
The representation $\psi(z)=1+\int_0^2 K_q(0,t)\e^{izt} dt$ gives immediately that
$|\psi(z)-1|\leq 2\|K_q(0,\cdot)\|_\infty \exp(2|\Im(z)|)$ for all $z\in\bb C$. This proves the first two statements. If $\Im(z)>0$ we may estimate $|\psi(z)-1|$ by
$\|K_q(0,\cdot)\|_\infty\int_0^2\e^{-t\Im(z)}dt \leq
\|K_q(0,\cdot)\|_\infty/\Im(z)$. Since the disc $\{z:|z-3i\rho|\leq\rho\}$ is contained in a sector where $\Im(z)\geq |z|/2$ we obtain the third statement.
\end{proof}

We now assume that $\rho\geq2\kappa$ and introduce the function $N(r)$ which counts the number of zeros of $\psi$ contained in the disc $\{z:|z-3i\rho|<r\}$. Note first that $N(0)=0$ since $|\psi(3i\rho)|\geq1/2$. Since $\psi$ has growth order one the counting function can only grow linearly. In fact, Jensen's formula
$$\int_0^{\e r} \frac{N(t)}tdt=\frac1{2\pi} \int_0^{2\pi} \log|\psi(3i\rho+\e r\e^{it})|dt-\log|\psi(3i\rho)|,$$
the inequality $N(r)\leq \int_0^{\e r}t^{-1} N(t) dt$, and part (2) of Lemma \ref{L3.1} give
\begin{equation} \label{090131.1}
N(r)\leq \log(2\kappa)+6\rho+2\e r.
\end{equation}

The elementary factor $E(w)=(1-w)\e^{w}$ satisfies  $|\log E(w)|\leq 2|w|^2$ as long as $|w|\leq 1/2$. Therefore, thinking of $w$ as $z/z_n$, we are interested in an estimate on
$$S=\sum_{|z_n|\geq R} |z_n|^{-2}.$$
It will be convenient to assume that $R\geq 9\rho\geq 18\kappa$. Since $|z_n|\geq R$ we get $|z_n|\geq 3|z_n-3i\rho|/4$ so that
$$S \leq 2 \sum_{|z_n-3i\rho |\geq 2R/3} |z_n-3i\rho|^{-2} \leq 2 \int_{2R/3}^\infty \frac{dN(t)}{t^2} \leq4\int_{2R/3}^\infty\frac{N(t)dt}{t^{3}}.$$
Using now inequality \eqref{090131.1} and $\log(2\kappa)\leq \rho/2$ gives $S\leq36/R$. With the aid of the inequality $|\e^u-1|\leq |u|\e^{|u|}$ we arrive at the following lemma.

\begin{lemma} \label{L3.2}
Let $z_n$, $n\in\bb N$, denote the nonzero zeros of the Jost function $\psi$ and assume that $R$ is a positive number which exceeds $18\kappa$, where $\kappa$ is the quantity given in Lemma \ref{L3.1}. Then
$$\left|\Pi(R,z)-1\right|\leq \frac{72|z|^2}{R} \exp(72|z|^2/R).$$
provided that $|z|\leq R/2$.
\end{lemma}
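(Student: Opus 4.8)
The plan is to pass to the logarithm of the product, reduce the estimate to a bound on the tail sum $S=\sum_{|z_n|\geq R}|z_n|^{-2}$, and then recover the bound on $|\Pi(R,z)-1|$ via the elementary inequality $|\e^u-1|\leq|u|\e^{|u|}$ with $u=\log\Pi(R,z)$.

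First I would note that the hypothesis $|z|\leq R/2$ together with $|z_n|\geq R$ forces every argument $w=z/z_n$ to satisfy $|w|\leq 1/2$, so that the elementary bound $|\log E(w)|\leq 2|w|^2$ recalled above applies to each factor. Writing $\log\Pi(R,z)=\sum_{|z_n|\geq R}\log E(z/z_n)$ with the principal branch (the series converges absolutely because $\sum|z/z_n|^2=|z|^2 S<\infty$), I obtain
\[ |\log\Pi(R,z)|\leq\sum_{|z_n|\geq R}2|z/z_n|^2=2|z|^2 S, \]
so the whole matter reduces to showing $S\leq 36/R$.

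The estimation of $S$ is the substantive step, and the one requiring care: the product is indexed by $|z_n|\geq R$, whereas the only quantitative control on the zeros is the linear bound \eqref{090131.1} for the counting function $N$, which is measured from the shifted center $3i\rho$. I would bridge this with a geometric comparison. Using $R\geq 9\rho$, one has $3\rho\leq R/3\leq|z_n|/3$, whence $|z_n-3i\rho|\leq|z_n|+3\rho\leq\frac43|z_n|$; this yields $|z_n|^{-2}\leq\frac{16}{9}|z_n-3i\rho|^{-2}\leq 2|z_n-3i\rho|^{-2}$, and it also shows that $|z_n|\geq R$ forces $|z_n-3i\rho|\geq 2R/3$. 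Hence I may pass to a sum over the shifted distances and rewrite it as a Stieltjes integral against $dN$, giving $S\leq 2\int_{2R/3}^\infty t^{-2}\,dN(t)$. Integrating by parts, using that the at-most-linear growth of $N$ makes $N(t)/t^2\to0$ and discarding the nonpositive boundary term, I arrive at $S\leq 4\int_{2R/3}^\infty t^{-3}N(t)\,dt$.

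Finally I would insert $N(t)\leq\log(2\kappa)+6\rho+2\e t$ from \eqref{090131.1}, absorb $\log(2\kappa)\leq\rho/2$ and $\rho\leq R/9$, and evaluate the two elementary integrals of $t^{-3}$ and $t^{-2}$ over $[2R/3,\infty)$; the constants collapse to the clean bound $S\leq 36/R$. Combining this with the first step gives $|\log\Pi(R,z)|\leq 72|z|^2/R$, and a single application of $|\e^u-1|\leq|u|\e^{|u|}$ then produces the asserted inequality. The only genuinely delicate point is the passage from the index condition $|z_n|\geq R$ to the shifted counting function and the subsequent integration-by-parts estimate; once $S\leq 36/R$ is in hand, the remainder is bookkeeping with constants.
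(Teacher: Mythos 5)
Your argument is correct and follows essentially the same route as the paper: bound $|\log\Pi(R,z)|$ by $2|z|^2 S$ using $|\log E(w)|\leq 2|w|^2$ for $|w|\leq 1/2$, estimate $S=\sum_{|z_n|\geq R}|z_n|^{-2}$ by comparison with the shifted counting function $N$ via $|z_n|\geq 3|z_n-3i\rho|/4$ and the Stieltjes integral $4\int_{2R/3}^\infty t^{-3}N(t)\,dt$, insert the Jensen bound \eqref{090131.1}, and finish with $|\e^u-1|\leq|u|\e^{|u|}$. The only difference is that you spell out the geometric comparison and the integration by parts in more detail than the paper does.
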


Now we return to the case of two potentials $q$ and $\tilde q$ in $B(Q)$. Let $\kappa$ be the number associated to $Q$ according to Lemma \ref{L3.1}. Since we assume the zeros of $\psi$ and $\tilde\psi$ within the disc of radius $R$ to coincide we get
$$\frac{\psi(z)}{\tilde\psi(z)}=\e^{g(z)-\tilde g(z)}\frac{\Pi(R,z)}{\tilde \Pi(R,z)}$$
and we need to estimate $\exp(g-\tilde g)$.
\begin{lemma} \label{L3.3}
There are positive constants $R_0$ and $c$ depending only on $\kappa$ such that
$$|\e^{g(z)-\tilde g(z)}-1|\leq  c R^{-1/3}$$
provided that $R\geq R_0$ and $|z|\leq R^{1/3}$.
\end{lemma}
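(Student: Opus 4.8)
The plan is to start from the identity recorded just above the lemma, namely
\[
\e^{g(z)-\tilde g(z)}=\frac{\psi(z)}{\tilde\psi(z)}\cdot\frac{\tilde\Pi(R,z)}{\Pi(R,z)}=:w(z),
\]
and to show that $w$ is within $O(R^{-1/3})$ of $1$ on a disc lying high in the upper half plane. The point is that $g-\tilde g$ is a polynomial of degree at most one, so once $w$ (equivalently the principal logarithm of $w$) is controlled on a connected set where it stays near $1$, the affine structure will propagate that control down to the region $|z|\le R^{1/3}$. I will deliberately avoid computing $g(0)-\tilde g(0)$ directly: the exponential $w$ determines $g-\tilde g$ only modulo $2\pi i$, and it is exactly this ambiguity that makes a naive estimate of $\Im(g(0)-\tilde g(0))$ awkward.

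First I choose the centre of the disc. With $\kappa$ the number attached to $Q$ by Lemma \ref{L3.1}, part (3) of that lemma gives $|\psi(z)-1|\le\kappa/\rho$ and $|\tilde\psi(z)-1|\le\kappa/\rho$ on the disc $D=\{z:|z-3i\rho|\le\rho\}$, while Lemma \ref{L3.2} gives $|\Pi(R,z)-1|\le 72|z|^2R^{-1}\e^{72|z|^2/R}$, and the same for $\tilde\Pi$, whenever $|z|\le R/2$. On $D$ one has $|z|\le 4\rho$, so the two error sources have sizes $\kappa/\rho$ and a constant times $\rho^2/R$. Balancing these forces $\rho\asymp R^{1/3}$, and this is precisely where the exponent $1/3$ in the statement originates. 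Taking $\rho=R^{1/3}$ and $R\ge R_0(\kappa)$ large enough that $\rho\ge 2\kappa$, $9\rho\le R$ and $4\rho\le R/2$ all hold, I obtain $|\psi/\tilde\psi-1|=O(R^{-1/3})$ and $|\tilde\Pi/\Pi-1|=O(R^{-1/3})$ uniformly on $D$, hence $|w(z)-1|\le C R^{-1/3}$ for all $z\in D$.

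Finally I exploit the affine structure. For $R$ large one has $|w-1|\le 1/2$ on the connected set $D$, so the principal logarithm $\ell=\log w$ is a well-defined holomorphic function on $D$ with $|\ell|\le 2|w-1|\le 2CR^{-1/3}$, and $\ell$ differs from $g-\tilde g$ by a constant $2\pi i k_0$; consequently $\ell$ is itself affine, say $\ell(z)=\hat A+Bz$. Evaluating at the two points $z_1=2i\rho$ and $z_2=4i\rho$ of $D$ and solving the resulting linear system gives $B=(\ell(z_2)-\ell(z_1))/(2i\rho)$, so that $|B|=O(R^{-1/3}/\rho)=O(R^{-2/3})$, and then $\hat A=\ell(z_1)-Bz_1$ with $|\hat A|=O(R^{-1/3})$. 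Therefore $|\ell(z)|=|\hat A+Bz|\le|\hat A|+|B|\,R^{1/3}=O(R^{-1/3})$ for every $z$ with $|z|\le R^{1/3}$. Since $\e^{g(z)-\tilde g(z)}=w(z)=\e^{\ell(z)}$, the inequality $|\e^u-1|\le|u|\e^{|u|}$ yields $|\e^{g(z)-\tilde g(z)}-1|\le|\ell(z)|\e^{|\ell(z)|}\le cR^{-1/3}$ there, which is the claim. The main obstacle is the competition between Lemmas \ref{L3.1} and \ref{L3.2} described above, which both fixes the scale $\rho\asymp R^{1/3}$ and is responsible for the rate; the only other delicate point is the passage to the principal logarithm $\ell$, which lets me sidestep pinning down the integer $k_0$ entirely, since only $\e^{g-\tilde g}=\e^{\ell}$ enters the conclusion.
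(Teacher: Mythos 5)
Your proposal is correct and follows essentially the same route as the paper: the same decomposition balancing the $\kappa/\rho$ error from Lemma \ref{L3.1}(3) against the $\rho^2/R$ error from Lemma \ref{L3.2}, the same choice $\rho^3=R$ on the disc $|z-3i\rho|\le\rho$, and the same exploitation of the affine structure of $g-\tilde g$ to propagate the bound down to $|z|\le R^{1/3}$. The only difference is that your two-point evaluation of the principal logarithm is an explicit proof of the propagation step that the paper merely asserts as ``easy to show.''
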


\begin{proof}
Suppose $|z-3i\rho|\leq\rho$. Since
$$\e^{g(z)-\tilde g(z)}-1=\frac{\tilde \Pi(R,z)}{\Pi(R,z)} \left(\frac{\psi(z)}{\tilde\psi(z)}-1\right)+\frac{\tilde \Pi(R,z)-\Pi(R,z)}{\Pi(R,z)}$$
we get, when $\rho^2/R$ is sufficiently small,
$$|\e^{g(z)-\tilde g(z)}-1| \leq \frac{A\kappa}{\rho}+\frac{B\rho^2}{R}$$
from (3) of Lemma \ref{L3.1} and Lemma \ref{L3.2} when $A$ and $B$ denote suitable numerical constants. The two contributions to the error are in balance when we choose $\rho$ on the order of $R^{-1/3}$. Specifically, there are positive constants $R_0$ and $c$ depending only on $\kappa$ such that
$$|\e^{g(z)-\tilde g(z)}-1| \leq \frac{c}{10} R^{-1/3}$$
if $\rho^3=R\geq R_0$ and $|z-3i\rho|\leq\rho$.

Suppose $f(z)=\exp(a_1z+a_0)-1$. It is easy to show that
$$|f(z)|\leq \frac{5\varepsilon}{1-\varepsilon}\exp(5\varepsilon/(1-\varepsilon))$$
in the disc $|z|\leq\rho$ if $|f(z)|\leq\varepsilon<1$ in the disc $|z-3i\rho|\leq\rho$. Applying this to the case at hand gives the stated estimate after possibly increasing $R_0$ to ensure that $cR^{-1/3}\leq1$.
\end{proof}

\begin{proof}[Proof of Theorem \ref{T3.1}]
Suppose $-R^{1/3}\leq z\leq R^{1/3}$. By part (1) of Lemma \ref{L3.1}
$$|\psi(z)-\tilde\psi(z)|\leq \kappa \left|\frac{\psi(z)}{\tilde\psi(z)}-1\right| \leq \kappa |\e^{g(z)-\tilde g(z)}-1| \left|\frac{\Pi(R,z)}{\tilde\Pi(R,z)}\right| + \kappa \left|\frac{\Pi(R,z)}{\tilde\Pi(R,z)}-1\right|.$$
Using the estimates obtained in Lemma \ref{L3.2} and Lemma \ref{L3.3} establishes the theorem for a $C$ depending only on $c$, $R_0$, $\kappa$, and numerical constants and hence only on $Q$.
\end{proof}

\section{Large $z$ asymptotics of the Jost functions: further results and consequences for transformation kernels}\label{section:6}
In this section we assume that $q$ and $\tilde q$ lie in $B(Q)$ and that $\tilde q-q$ lies in $L^p[0,1]$ for some $p>1$. If $p>2$ then $\tilde q-q$ is still in $L^2([0,1])$ and therefore we assume henceforth that $p\in(1,2]$.

We start with (\ref{eq:finvert}) from Section \ref{section:1}:
\[ K_{\tilde{q}}(0,t)-K_q(0,t) = \frac{1}{2\pi}\int_{\bb R}(\tp(z)-\psi(z))\exp(-izt)dz. \]
In particular, therefore,
\be K_{\tilde{q}}(0,t)-K_q(0,t) = \frac{1}{2\pi}\int_{-R^{1/6}}^{R^{1/6}}\hspace{-12pt}(\tilde{\psi}-\psi)(z)\exp(-izt)dz
  + \frac{1}{2\pi}\int_{|z|>R^{1/6}}\hspace{-7mm}(\tilde{\psi}-\psi)(z)\exp(-izt)dz. \label{eq:6.2} \ee
The first term on the right hand side of (\ref{eq:6.2}) will be handled using Theorem \ref{T3.1},
which yields
\be \left| \frac{1}{2\pi}\int_{-R^{1/6}}^{R^{1/6}}(\tilde{\psi}-\psi)(z)\exp(-izt)dz \right| \leq
\frac{C}{R^{1/6}}. \label{eq:6.3}\ee
The second term,
\be E_{R}(t) := \frac{1}{2\pi}\int_{|z|>R^{1/6}}(\tilde{\psi}-\psi)(z)\exp(-izt)dz, \label{eq:6.4} \ee
will be handled using asymptotics which refine the results in Lemma \ref{L3.1} and
which we develop using the transformation equation (\ref{eq:2})
(with $x=0$) and integration by parts. Following the notation in (\ref{eq:Hdef}), we obtain, after an integration by parts,
$$\psi(z) = 1 + \frac{iK_q(0,0)}{z}-\frac{i}{4z}\int_0^2 g(t)\exp(izt)dt,$$
where $g(t)=q(t/2)-4H_t(0,t)$. We can also write
\[\psi(z) = 1 + \frac{iK_q(0,0)}{z}-\frac{i}{4z}\hat{g}(z) \]
where $\hat g$ is the Fourier transform of $g$. This immediately yields, in an obvious notation,
\begin{equation} \label{090328.1}
\tp(z)-\psi(z) = \frac{i}{z}(K_{\tilde{q}}-K_q)(0,0)-\frac{i}{4z}\widehat{(\tilde{g}-g)}(z).
\end{equation}
Since $H_t(0,t)$ is continuous and bounded by the bound given in (\ref{eq:Htbound}) $\tilde{g}-g$ is in $L^p([0,2])$ where $p\in(1,2]$ so that
$\widehat{(\tilde{g}-g)}$ is in $L^{p/(p-1)}(\bb R)$ by the Hausdorff-Young inequality.

Now we substitute (\ref{090328.1}) into the right hand side of (\ref{eq:6.4}) to obtain
\begin{multline}
E_{R}(t) = \frac{i}{2\pi}(K_{\tilde{q}}-K_q)(0,0)\int_{|z|>R^{1/6}}\frac1z \exp(-izt) dz\\
 - \frac{i}{8\pi}\int_{|z|>R^{1/6}}\frac1z \widehat{(\tilde{g}-g)}(z) \exp(-izt)dz. \label{eq:6.12}
\end{multline}
The first integral in (\ref{eq:6.12}) can be rewritten by a change of variable $\xi = tz$ as
$ \int_{|\xi|>tR^{1/6}}\frac{1}{\xi}\exp(-i\xi)d\xi  $, and hence an integration by parts
yields
\[  -2i\frac{\cos(tR^{1/6})}{tR^{1/6}} + i\int_{|\xi|>tR^{1/6}}\exp(-i\xi) \frac{d\xi}{\xi^2}, \]
which is $O((tR^{1/6})^{-1})$ when $tR^{1/6}$ is large. When $tR^{1/6}$ is small
the integral can be estimated by taking the Cauchy principal value. In either
case, there exists a numerical constant $C_1$ such that
\be \left|\frac{i}{2\pi}\int_{|z|>R^{1/6}}\frac{1}{z}\exp(-izt)dz\right| \leq C_1\min\left(1,\frac{1}{tR^{1/6}}\right). \label{eq:6.14} \ee
The remaining integral in (\ref{eq:6.12}) is estimated by the inequalities of H\"{o}lder and Hausdorff-Young so that
\be \left|\int_{|z|>R^{1/6}}\frac1z \widehat{(\tilde{g}-g)}(z) \exp(-izt)dz\right|
 \leq C_2 (p-1)^{-1/p} R^{(1-p)/(6p)}(1+\|\tilde q-q\|_p) \label{eq:6.15} \ee
where the constant $C_2$ depends only on $Q$ by (\ref{eq:Htbound}). Note that this becomes unbounded as $p$ tends to one.

Combining the estimates (\ref{eq:6.3}), (\ref{eq:6.14}) and (\ref{eq:6.15}) yields:
\begin{thm}\label{theorem:m1}
Suppose that $q$ and $\tilde{q}$ lie in $B(Q)$ and that $\tilde q-q$ is in $L^p[0,1]$, $p\in(1,2]$; suppose that $R_0$ is as in Theorem \ref{T3.1} and that $R\geq R_0$. Then there exists a constant $C$ (possible larger than the one used in  Theorem \ref{T3.1}) depending only on $Q$ such that
\be |K_{\tilde{q}}(0,t)-K_q(0,t)| \leq C(p-1)^{-1/p}(1+\|q-\tilde{q}\|_p)\min\left(1,\frac{1}{tR^{\nu}}\right)\ee
where $\nu =(p-1)/(6p)$.
\end{thm}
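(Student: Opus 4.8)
The plan is to assemble the conclusion directly from the three estimates already established, starting from the decomposition (\ref{eq:6.2}) of the Fourier inversion (\ref{eq:finvert}) into a low-frequency part over $[-R^{1/6},R^{1/6}]$ and the high-frequency tail $E_R(t)$ of (\ref{eq:6.4}). First I would record that the low-frequency part is bounded by $C/R^{1/6}$ via (\ref{eq:6.3}), which is exactly where Theorem \ref{T3.1} enters.

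For the tail $E_R(t)$ I would use the representation (\ref{eq:6.12}), obtained by feeding the refined asymptotics (\ref{090328.1}) into (\ref{eq:6.4}). Its first integral carries the prefactor $(K_{\tilde q}-K_q)(0,0)$, which (\ref{090123.1}) bounds by $2Q\e^{2Q}$ --- a constant depending only on $Q$ --- while the integral itself is handled by (\ref{eq:6.14}) and already has the desired $\min(1,1/(tR^{1/6}))$ shape. The second integral is controlled by (\ref{eq:6.15}) through H\"older and Hausdorff-Young, contributing the factor $(p-1)^{-1/p}R^{-\nu}(1+\|\tilde q-q\|_p)$ with $\nu=(p-1)/(6p)$. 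At this point every ingredient is in hand and what remains is repackaging.

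The one step deserving care --- and the only real obstacle --- is reconciling these differently shaped bounds into the single factor $\min(1,1/(tR^\nu))$. I would note first that $\nu\leq 1/12<1/6$ for $p\in(1,2]$, so $R^\nu\leq R^{1/6}$; this makes the $R^{-1/6}$ decay of the low-frequency term at least as fast as $R^{-\nu}$, and turns the $\min(1,1/(tR^{1/6}))$ of (\ref{eq:6.14}) into a quantity bounded by $\min(1,1/(tR^\nu))$. The two terms that are constant in $t$ (the low-frequency $C/R^{1/6}$ and the (\ref{eq:6.15}) term) still need to be recast in $\min$ form; here I would exploit that $K_{\tilde q}(0,t)-K_q(0,t)$ vanishes for $t\geq 2$, so only $t\in[0,2]$ matters, and on that interval $\min(1,1/(tR^\nu))\geq 1/(2R^\nu)$ because $R\geq R_0\geq\e$ forces $R^\nu>1$. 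Hence any $O(R^{-\nu})$ quantity is at most $2\min(1,1/(tR^\nu))$ there.

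Finally I would absorb all numerical and $Q$-dependent constants into one $C$, using $(p-1)^{-1/p}\geq 1$ and $1+\|q-\tilde q\|_p\geq 1$ to majorize the terms lacking those factors, which produces the stated estimate with $\nu=(p-1)/(6p)$. I anticipate no analytic difficulty beyond this bookkeeping, since the substantive work lives entirely in the three cited estimates; it is the slowest rate $R^{-\nu}$, arising from the Hausdorff-Young step as $p\to 1$, that pins down the exponent in the conclusion and accounts for the blow-up factor $(p-1)^{-1/p}$.
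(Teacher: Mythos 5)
Your proposal is correct and follows exactly the route of the paper, whose own ``proof'' is simply the sentence that combining the estimates (\ref{eq:6.3}), (\ref{eq:6.14}) and (\ref{eq:6.15}) yields the theorem; you have merely made explicit the bookkeeping (bounding $(K_{\tilde q}-K_q)(0,0)$ via (\ref{090123.1}), using $R^{\nu}\leq R^{1/6}$, and converting the $t$-independent terms into the $\min$ form on $[0,2]$) that the authors leave implicit.
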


The relationship (\ref{090328.2}) gives immediately the following corollary.
\begin{cor} \label{T:4.2}
Under the assumptions of Theorem \ref{theorem:m1} we have the estimate
$$|B(0,t)|\leq C' (p-1)^{-1/p}(1+\|q-\tilde{q}\|_p)\min\left(1,\frac{\log R}{tR^{\nu}}\right)$$
where $\nu =(p-1)/(6p)$ and $C'$ is a constant depending only on $Q$.
\end{cor}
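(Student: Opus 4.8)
The plan is to read $B(0,t)$ off the identity (\ref{090328.2}) and to estimate its two pieces separately. Abbreviating $D(s):=(K_{\tilde q}-K_q)(0,s)$ and $M:=(p-1)^{-1/p}(1+\|q-\tilde q\|_p)$, Theorem \ref{theorem:m1} supplies $|D(s)|\leq CM\min(1,1/(sR^\nu))$, while (\ref{090123.1}) bounds the kernel $L_q$ in sup-norm by a constant $Q\e^{2Q}$ depending only on $Q$. Substituting these into
$$B(0,t)=D(t)+\int_0^t D(s)L_q(s,t)\,ds$$
gives $|B(0,t)|\leq CM\min(1,1/(tR^\nu))+Q\e^{2Q}CM\int_0^t\min(1,1/(sR^\nu))\,ds$, so the whole matter reduces to estimating the single integral $I(t):=\int_0^t\min(1,1/(sR^\nu))\,ds$.

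I would evaluate $I(t)$ by splitting the range at $s=R^{-\nu}$. On $[0,R^{-\nu}]$ the integrand equals $1$ and contributes $R^{-\nu}$; on $[R^{-\nu},t]$ it equals $1/(sR^\nu)$ and contributes $R^{-\nu}\log(tR^\nu)=R^{-\nu}(\log t+\nu\log R)$, which is exactly where the logarithm in the statement originates. Since the kernels vanish for $t\geq2$ we have $t\in[0,2]$, so $\log t\leq\log2\leq\log R$ once $R\geq R_0\geq\e$; hence for $t>R^{-\nu}$ one gets $I(t)\leq R^{-\nu}+2R^{-\nu}\log R\leq 3R^{-\nu}\log R$, while for $t\leq R^{-\nu}$ trivially $I(t)\leq t$.

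It remains to assemble these bounds into the form $\min(1,\log R/(tR^\nu))$, treating the two factors of the minimum in turn. For the constant bound, $\min(1,1/(tR^\nu))\leq1$ and $I(t)\leq t\leq2$ give $|B(0,t)|\leq C'M$ at once. For the decaying bound, the diagonal term satisfies $|D(t)|\leq CM/(tR^\nu)\leq CM\log R/(tR^\nu)$ because $\log R\geq1$; and for the integral term the compact-support bound $t\leq2$ gives $1/t\geq1/2$, whence $R^{-\nu}\log R\leq 2\log R/(tR^\nu)$ and the $t$-free estimate $3R^{-\nu}\log R$ becomes a multiple of $\log R/(tR^\nu)$. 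In the complementary regime $t\leq R^{-\nu}$ one has $\log R/(tR^\nu)\geq\log R\geq1$, so the minimum equals $1$ and the already-established constant bound applies. Collecting all constants produces the asserted inequality with $C'$ depending only on $Q$. The only genuinely delicate point is reconciling the $t$-independent estimate $R^{-\nu}\log R$ emerging from $I(t)$ with the $1/t$ decay demanded on the right-hand side; this is settled entirely by the bound $t\leq2$ coming from the support of the kernels.
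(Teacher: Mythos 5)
Your proposal is correct and follows exactly the route the paper intends: substitute the bound of Theorem \ref{theorem:m1} into the identity (\ref{090328.2}), use the sup-norm bound (\ref{090123.1}) on $L_q$, and estimate $\int_0^t\min(1,1/(sR^\nu))\,ds$, the logarithm arising from the $1/s$ integration and the reconciliation with the $1/t$ decay coming from $t\leq2$. The paper states the corollary follows ``immediately'' from (\ref{090328.2}) without writing out these details, so your argument is simply the fleshed-out version of the same proof.
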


\begin{Remark}\label{remark:1}
If we do not require a rate-of-convergence estimate then we can assume simply that $q$ and $\tilde{q}$ lie in $L^1[0,1]$. In order to see this, consider the term
\[ F_{R}(t):= \int_{|z|>R^{1/6}}\frac{\hat{q}(z)}{z}\exp(-izt)dz. \]
Using the definition of the Fourier transform and Fubini's theorem, we get
\[ F_{R}(t) = 2i \lim_{n\to\infty} \int_0^1 q(s)\left(\int_{R^{1/6}}^n \frac{\sin(z(s-t))}{z}dz\right)ds. \]
The inner integral is bounded as function of $n$, $R$, $s$, and $t$. Hence, applying the dominated convergence theorem twice shows that
$F_{R}(t)$ tends to zero as $R$ tends to infinity.
\end{Remark}

\section{Estimating the difference of two potentials from the difference of two Jost functions} \label{section:7}
Eqn. (\ref{eq:12}) will yield bounds on $\int_x^1(\tilde{q}-q)$ from a bound on $B(x,x)$ and hence, in
particular, from a bound on $B(x,t)$. Corollary \ref{T:4.2} gives a bound on $B(0,t)$. In order to determine a bound on $B(x,t)$ from the bound on $B(0,t)$ we first observe that, following the material in Section \ref{section:1},
\[ B(x,t) = 0 \;\;\; \mbox{for $x+t\geq 2$}. \]
In particular,
\be B(x,2-x) = 0, \;\;\; 0\leq x \leq 1. \label{eq:18} \ee
We shall show that this condition, combined with the knowledge of $B(0,t)$ for $0\leq t\leq 2$,
determines $B$ completely. In order to do this we derive a second integral equation for $B$.

Following Levitan \cite{MR933088} we observe that the function
\[ z_0(x,t) = \frac{1}{2}\int_0^x du \int_{t+u-x}^{t-u+x}F(u,v)dv \]
satisfies the inhomogeneous wave equation
\[ z_{0,xx}-z_{0,tt} = F(x,t), \]
with boundary condition $z_0(0,t) = 0$. Putting $g(x) = z_0(x,2-x)$ we observe that
\[ v_0(x,t) = g\left(1-\frac{t+x}{2}\right)-g\left(1-\frac{t-x}{2}\right) \]
satisfies the homogeneous wave equation together with the boundary conditions $v_0(0,t)=0$,
$v_0(x,2-x) = -g(x) = -z_0(x,2-x)$. Hence the function $u_0(x,t) = z_0(x,t)+v_0(x,t)$ will satisfy
the inhomogeneous wave equation
\[ u_{0,xx}-u_{0,tt} = F(x,t) \]
with homogeneous boundary conditions
\[ u_0(0,t) = 0, \;\;\; u_0(x,2-x) = 0. \]
Now $u_0(x,t)$ is expressed as a sum of three integrals:
\[ u_0(x,t) = I_1(x,t)+I_2(x,t)-I_3(x,t), \]
where
\[ I_1(x,t) = z_0(x,t) = \frac{1}{2}\int_0^x du \int_{t+u-x}^{t-u+x}F(u,v)dv, \]
\[ I_2(x,t) = g\left(1-\frac{t+x}{2}\right) = \frac{1}{2}\int_0^{1-(t+x)/2}\hspace{-3mm}du
\int_{t+x+u}^{2-u}F(u,v)dv, \]
\[ I_3(x,t) = g\left(1-\frac{t-x}{2}\right) = \frac{1}{2}\int_0^{1-(t-x)/2}\hspace{-3mm}du
\int_{t-x+u}^{2-u}F(u,v)dv. \]
These integrals all have the same integrand and elementary calculations show that
the regions over which integrals $I_1$ and $I_2$ take place are disjoint and are sub-domains
of the region over which integral $I_3$ takes place. Subtracting $I_1$ and $I_2$ from $I_3$
leaves an integral over the rectangle with corners $(x,t)$, $(1-(t-x)/2,1+(t-x)/2)$,
$(0,t+x)$ and $(1-(t+x)/2,1+(t+x)/2)$. We make the change of variables
\[ v = \alpha + \beta, \;\;\; u = \alpha - \beta, \;\;\; du dv = 2 d\alpha d\beta , \]
and obtain
\[ u_0(x,t) = -\int_{(t+x)/2}^1 d\alpha \int_{(t-x)/2}^{(t+x)/2}F(\alpha-\beta,\alpha+\beta)d\beta.
\]
If we now ask for the solution $w$ of the wave equation
\[ w_{xx}-w_{tt} = F(x,t),\;\;\; w(0,t) = B(0,t), \;\;\; w(x,2-x)=0, \]
then, since $B(0,t)=0$ for $t\geq 2$,  $w$ will be given by
\be w(x,t) = u_0(x,t) + B(0,x+t). \label{eq:26}\ee

Now the transformation kernel $B$ is required to satisfy the wave equation
\[ (B_{x}-B_{t})_x+(B_x-B_t)_t=(\tilde{q}(x)-q(t))B \]
subject to the condition (\ref{eq:18}), and with $B(0,t)$ known. In view of the expression (\ref{eq:26})
this means we should choose $B$ as the solution of the integral equation
\[ B(x,t) = B(0,x+t) + \int_{(t+x)/2}^1 d\alpha \int_{(t-x)/2}^{(t+x)/2}(q(\alpha+\beta)-\tilde{q}(\alpha-\beta))B(\alpha-\beta,\alpha+\beta)d\beta.
\]
Iteration (cf. Lemma \ref{theorem:1} below) shows that this solution is given by
\[ B(x,t) = \sum_{n=0}^\infty B_n(x,t) \]
where
\[ B_0(x,t) = B(0,x+t) \]
and
\be B_{n+1}(x,t) =  \int_{(t+x)/2}^1 d\alpha \int_{(t-x)/2}^{(t+x)/2}(q(\alpha+\beta)-\tilde{q}(\alpha-\beta))B_n(\alpha-\beta,\alpha+\beta)d\beta.
\label{eq:29}\ee
\begin{lemma}\label{theorem:1}
Suppose that $q$ and $\tilde q$ are in $B(Q)$ and that there exist constants $C_0$, $C_1$ and $R_2\geq \e$ such that for all $t\in (0,2]$,
\be |B(0,t)| \leq C_1+C_0 \min\left(1,\frac{1}{tR_2}\right). \label{eq:30}\ee
Then
\be |B_n(x,t)| \leq \left(C_1+C_0\frac{\log(2R_2)}{R_2}\right)\frac{(2Q)^n}{(n-1)!}\left(1-\frac{t+x}{2}\right)^{n-1} \label{eq:31}\ee
whenever $n\in\mathbb{N}$ and $0\leq x+t \leq 2$.
\end{lemma}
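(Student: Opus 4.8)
The plan is to prove the bound (\ref{eq:31}) by induction on $n\in\bb N$, exploiting the self-similar form of the recursion (\ref{eq:29}). Writing $\sigma=(t+x)/2$, the whole argument rests on two elementary estimates that I would establish first. The first is that the inner $\beta$-integral of the potential difference is bounded \emph{uniformly in $\alpha$}: applying the substitutions $v=\alpha+\beta$ and $u=\alpha-\beta$ and using $q,\tilde q\in B(Q)$,
$$\int_{(t-x)/2}^{(t+x)/2}|q(\alpha+\beta)-\tilde q(\alpha-\beta)|\,d\beta \leq \int_{\bb R}|q|+\int_{\bb R}|\tilde q| \leq 2Q.$$
The second is the identity $\int_\sigma^1(1-\alpha)^{n-1}\,d\alpha=(1-\sigma)^n/n$, which is precisely what converts the weight $(1-\alpha)^{n-1}$ coming from the inductive hypothesis into the factor $(1-\sigma)^n/n$, thereby advancing both the factorial index and the power of $1-\sigma$ by one at each step. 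A further structural observation, crucial for the base case, is that since $B_n$ is evaluated at $(\alpha-\beta,\alpha+\beta)$ inside (\ref{eq:29}), the quantity $(t+x)/2$ attached to it equals $\alpha$; in particular $B_0(\alpha-\beta,\alpha+\beta)=B(0,2\alpha)$ depends on $\alpha$ alone.

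For the base case $n=1$ I would use this last observation to factor $B(0,2\alpha)$ out of the $\beta$-integral, bound the remaining $\beta$-integral by $2Q$, and enlarge the $\alpha$-range to all of $[0,1]$, obtaining
$$|B_1(x,t)|\leq 2Q\int_0^1|B(0,2\alpha)|\,d\alpha = Q\int_0^2|B(0,\tau)|\,d\tau.$$
Inserting the hypothesis (\ref{eq:30}) and computing $\int_0^2\min(1,1/(\tau R_2))\,d\tau=(1+\log(2R_2))/R_2$ gives $|B_1(x,t)|\leq Q(2C_1+C_0(1+\log(2R_2))/R_2)$. The standing assumption $R_2\geq\e$ then yields $\log(2R_2)\geq1$, whence $1+\log(2R_2)\leq 2\log(2R_2)$, and the right-hand side is at most $2Q(C_1+C_0\log(2R_2)/R_2)$, which is exactly (\ref{eq:31}) for $n=1$.

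For the inductive step, assuming (\ref{eq:31}) for some $n$, I would substitute the bound $|B_n(\alpha-\beta,\alpha+\beta)|\leq D\,(2Q)^n(1-\alpha)^{n-1}/(n-1)!$, with $D=C_1+C_0\log(2R_2)/R_2$, into (\ref{eq:29}). Bounding the $\beta$-integral of the potential difference by $2Q$ as above and then integrating $(1-\alpha)^{n-1}$ over $\alpha\in[\sigma,1]$ produces
$$|B_{n+1}(x,t)|\leq D\,\frac{(2Q)^n}{(n-1)!}\cdot 2Q\int_\sigma^1(1-\alpha)^{n-1}\,d\alpha = D\,\frac{(2Q)^{n+1}}{n!}(1-\sigma)^n,$$
which is (\ref{eq:31}) at level $n+1$. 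This closes the induction.

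Both workhorse estimates are routine, and the inductive step is essentially bookkeeping once one checks that the $\beta$-integral is bounded independently of $\alpha$ (so that it factors out of the $\alpha$-integration) and that the surviving $(1-\alpha)^{n-1}$ weight is what simultaneously lowers the factorial and raises the power of $1-\sigma$. I expect the only genuinely delicate point to be the base case, where the $\min(1,1/(tR_2))$ profile of the data (\ref{eq:30}) has to be integrated against $d\tau$ correctly and the resulting stray additive constant absorbed into the logarithm---this is exactly where the hypothesis $R_2\geq\e$ enters and explains the otherwise mysterious appearance of $\log(2R_2)/R_2$ in the statement.
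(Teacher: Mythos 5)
Your proof is correct and follows essentially the same route as the paper's: induction on $n$, with the base case obtained by bounding the inner $\beta$-integral by $2Q$ and integrating the hypothesis on $B(0,\cdot)$ over $\alpha\in[0,1]$, and the inductive step by the identity $\int_{(t+x)/2}^1(1-\alpha)^{n-1}\,d\alpha=(1-(t+x)/2)^n/n$. You in fact supply the explicit computation of $\int_0^2\min(1,1/(\tau R_2))\,d\tau=(1+\log(2R_2))/R_2$ and the role of $R_2\geq\e$ in absorbing the stray constant, details the paper leaves implicit.
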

\begin{proof} The proof is by induction. We first check that the estimate holds for $n=1$. We have
\begin{eqnarray*}
 |B_1(x,t)| & \leq & \int_{(t+x)/2}^1d\alpha \int_{(t-x)/2}^{(t+x)/2}|q(\alpha+\beta)-\tilde{q}(\alpha-\beta)||B(0,2\alpha)|d\beta \\
  & \leq & 2Q \int_{0}^1|B(0,2\alpha)|d\alpha  \\
  & \leq & 2QC_0 \left(C_1+C_0\frac{\log(2R_2)}{R_2}\right),
\end{eqnarray*}
which establishes the result for $n=1$.

Next we substitute the estimate (\ref{eq:31}) into the right hand side of (\ref{eq:29}) and try to recover the appropriate estimate for $B_{n+1}$. Since (\ref{eq:31}) holds we have
$$|B_{n+1}(x,t)| \leq \left(C_1+C_0\frac{\log(2R_2)}{R_2}\right)\frac{(2Q)^{n+1}}{(n-1)!} \int_{(t+x)/2}^1\hspace{-11pt}
 (1-\alpha)^{n-1}d\alpha$$
which yields the required estimate.
\end{proof}

\begin{lemma}\label{theorem:2}
Under the hypotheses of Lemma \ref{theorem:1} the estimate
\be |B(x,t)| \leq \left(C_1+C_0\frac{\log(R_2)}{(x+t)R_2}\right) (1+8Q\e^{2Q}) \label{eq:33}\ee
holds for all $(x,t)$ in the triangle bounded by the lines $x=0$, $x=t$ and $x+t=2$.
\end{lemma}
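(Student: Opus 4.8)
The plan is to exploit the series representation $B(x,t) = \sum_{n=0}^\infty B_n(x,t)$ together with the per-term bound supplied by Lemma \ref{theorem:1}, treating the $n=0$ term separately from the tail. First I would isolate $B_0(x,t) = B(0,x+t)$ and bound it directly from hypothesis (\ref{eq:30}), writing $|B_0(x,t)| \leq C_1 + C_0 \min(1,1/((x+t)R_2))$. The key elementary observation is that, since $R_2 \geq \e$ forces $\log R_2 \geq 1$, one has $\min(1,1/(sR_2)) \leq \log R_2/(sR_2)$ for every $s>0$: when $sR_2 \geq 1$ this is immediate from $\log R_2 \geq 1$, while when $sR_2 < 1$ it follows from $sR_2 < 1 \leq \log R_2$. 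Taking $s = x+t$ gives $|B_0(x,t)| \leq C_1 + C_0\log R_2/((x+t)R_2)$, which is precisely the prefactor in the asserted bound.

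Next I would estimate the tail $\sum_{n\geq1} B_n(x,t)$ using (\ref{eq:31}). Shifting the summation index by one, the relevant series sums in closed form,
\[
\sum_{n=1}^\infty \frac{(2Q)^n}{(n-1)!}\left(1-\frac{t+x}2\right)^{n-1} = 2Q\exp\left(2Q\left(1-\frac{t+x}2\right)\right) \leq 2Q\,\e^{2Q},
\]
where the last inequality uses $0 \leq 1-(t+x)/2 \leq 1$ throughout the triangle. This yields $\bigl|\sum_{n\geq1} B_n(x,t)\bigr| \leq 2Q\e^{2Q}\bigl(C_1 + C_0\log(2R_2)/R_2\bigr)$, and incidentally confirms that the series defining $B$ converges absolutely.

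The remaining work is to match the tail prefactor $C_1 + C_0\log(2R_2)/R_2$ against the $B_0$ prefactor $C_1 + C_0\log R_2/((x+t)R_2)$. Here I would use two crude inequalities valid on the triangle: because $R_2 \geq \e > 2$ one has $\log(2R_2) = \log 2 + \log R_2 \leq 2\log R_2$, and because $x+t \leq 2$ one has $1/R_2 = (x+t)/((x+t)R_2) \leq 2/((x+t)R_2)$. Together these give $\log(2R_2)/R_2 \leq 4\log R_2/((x+t)R_2)$, whence $C_1 + C_0\log(2R_2)/R_2 \leq 4\bigl(C_1 + C_0\log R_2/((x+t)R_2)\bigr)$. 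Substituting this into the tail bound produces a factor $8Q\e^{2Q}$, and adding the $B_0$ contribution gives $(1+8Q\e^{2Q})\bigl(C_1 + C_0\log R_2/((x+t)R_2)\bigr)$, exactly as claimed.

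I do not anticipate a genuine obstacle here: the argument is an application of Lemma \ref{theorem:1} followed by elementary bookkeeping, and the vertex $x+t=0$ need not be addressed separately since the right-hand side is then infinite. The only point requiring care is the uniform conversion among the three shapes of the error term---$\min(1,1/(sR_2))$, $\log(2R_2)/R_2$, and the target $\log R_2/((x+t)R_2)$---each step resting on the two structural facts $R_2 \geq \e$ and $0 \leq x+t \leq 2$; one must track the numerical constants so that they assemble into precisely $8Q\e^{2Q}$ and no larger.
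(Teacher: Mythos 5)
Your proof is correct and follows the same route as the paper, which simply declares the lemma an ``immediate consequence'' of the decomposition $B(x,t)=B(0,x+t)+\sum_{n\geq1}B_n(x,t)$ together with the bounds (\ref{eq:30}) and (\ref{eq:31}); you have merely made explicit the summation of the tail to $2Q\e^{2Q}$ and the elementary conversions (using $R_2\geq\e$ and $x+t\leq2$) that assemble the constant $1+8Q\e^{2Q}$. The bookkeeping checks out.
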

\begin{proof} This is an immediate consequence of the fact that $B(x,t) = B(0,x+t) + \sum_{n=1}^\infty B_n(x,t)$ together with the bounds (\ref{eq:30},\ref{eq:31}).
\end{proof}

\begin{thm}\label{theorem:3}
Let $Q_1$ and $Q_p$ be positive numbers and $p\in(1,2]$. Then there is a positive number $C$, depending only on $Q_1$ and $Q_p$, and a positive number $R_0$, depending on $Q_1$, $Q_p$, and $p$, so that the following statement is true for any $R\geq R_0$. If $q$ and $\tilde{q}$ are two potentials in $B(Q_1)$ such that $\|\tilde{q}-q\|_p\leq Q_p$ and for which the zeros of the corresponding Jost functions are identical in a disc of radius $R$, then
\be \sup_{x\in [0,1]}\left|\int_x^1 (\tilde{q}-q)\right| \leq C(Q_1,Q_p) (\log R)^{(2p-2)/(2p-1)}R^{-(p-1)^2/(6p(2p-1))}. \label{eq:34b} \ee
\end{thm}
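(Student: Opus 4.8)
The plan is to reduce the whole statement to a pointwise bound on the diagonal values $B(x,x)$ of the transformation kernel. Indeed, equation (\ref{eq:12}) gives $\int_x^1(\tilde q-q)=2B(x,x)$, so that
$$\sup_{x\in[0,1]}\left|\int_x^1(\tilde q-q)\right|=2\sup_{x\in[0,1]}|B(x,x)|.$$
The two ingredients already assembled are Corollary \ref{T:4.2}, which controls $B(0,t)$ on the boundary $x=0$, and Lemma \ref{theorem:2}, which propagates a boundary bound on $B(0,\cdot)$ to a bound on $B(x,t)$ throughout the triangle. First I would feed the former into the latter. Concretely, I apply Corollary \ref{T:4.2} with the matching of constants $C_1=0$, $C_0=C'(p-1)^{-1/p}(1+Q_p)$ and $R_2=R^\nu/\log R$, where $\nu=(p-1)/(6p)$; since $1/R_2=\log R/R^\nu$, the hypothesis (\ref{eq:30}) holds verbatim for $R$ large enough that $R_2\geq\e$. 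Lemma \ref{theorem:2} then gives, at $t=x$,
$$2|B(x,x)|\leq C_0\frac{\log R_2}{x\,R_2}\,(1+8Q_1\e^{2Q_1})=:\frac{A}{x},$$
so that, writing $\phi(x):=\int_x^1(\tilde q-q)$, we have $|\phi(x)|\leq A/x$ with $A=C_0(1+8Q_1\e^{2Q_1})\log R_2/R_2$. A direct computation gives $A\leq\mathrm{const}(Q_1,Q_p)\,(\log R)^2R^{-\nu}$, the point being that the prefactor $C_0\nu=\tfrac{C'(1+Q_p)}{6p}(p-1)^{(p-1)/p}$ stays bounded as $p\to1^+$.

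The difficulty is that this estimate degenerates as $x\to0$: near the left endpoint the diagonal bound gives only the useless $A/x$, and indeed $B(0,0)$ cannot be made small from resonance data alone (compare Remark \ref{remark:1}, where without the $L^p$ assumption one gets only qualitative convergence). This is exactly where I bring in the hypothesis $\|\tilde q-q\|_p\leq Q_p$. For any threshold $\delta\in(0,1]$ and any $x\leq\delta$, H\"older's inequality gives
$$|\phi(x)-\phi(\delta)|=\left|\int_x^\delta(\tilde q-q)\right|\leq Q_p\,(\delta-x)^{(p-1)/p}\leq Q_p\,\delta^{(p-1)/p},$$
while for $x\geq\delta$ the diagonal bound already yields $|\phi(x)|\leq A/x\leq A/\delta$. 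Combining the two regimes,
$$\sup_{x\in[0,1]}|\phi(x)|\leq\frac{A}{\delta}+Q_p\,\delta^{(p-1)/p}.$$

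It then remains to balance the two terms. Minimizing $A\delta^{-1}+Q_p\delta^{(p-1)/p}$ over $\delta\in(0,1]$, the unconstrained optimum $\delta_*=(A/(Q_p\gamma))^{1/(1+\gamma)}$ with $\gamma=(p-1)/p$ lies in $(0,1]$ once $R\geq R_0$ (as $A$ is then small), and produces a bound of the form $(\gamma^{1/(1+\gamma)}+\gamma^{-\gamma/(1+\gamma)})\,A^{\gamma/(1+\gamma)}Q_p^{1/(1+\gamma)}$, where $\gamma/(1+\gamma)=(p-1)/(2p-1)$ and $1/(1+\gamma)=p/(2p-1)$. Substituting $A\leq\mathrm{const}(Q_1,Q_p)(\log R)^2R^{-\nu}$ gives exactly
$$\sup_{x\in[0,1]}|\phi(x)|\leq C(Q_1,Q_p)\,(\log R)^{(2p-2)/(2p-1)}R^{-(p-1)^2/(6p(2p-1))},$$
since $2\nu(p-1)/(2p-1)$ collapses to $(p-1)^2/(3p(2p-1))$ wait—more precisely $\nu(p-1)/(2p-1)=(p-1)^2/(6p(2p-1))$ and $2(p-1)/(2p-1)=(2p-2)/(2p-1)$. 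The step I expect to be most delicate is verifying that the implied constant is $p$-independent: both the optimization prefactor $\gamma^{1/(1+\gamma)}+\gamma^{-\gamma/(1+\gamma)}$ and the quantity $C_0\nu$ must be shown to remain bounded as $p\to1^+$, using that $\gamma\log\gamma\to0$, so that all the $p$-dependence is pushed into the exponents of $\log R$ and $R$ exactly as stated.
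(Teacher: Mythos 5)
Your proposal is correct and follows essentially the same route as the paper's proof: feed Corollary \ref{T:4.2} into Lemma \ref{theorem:2} (with $C_1=0$) to get a bound of the form $2|B(x,x)|\leq A/x$, split at a threshold (your $\delta$, the paper's $\eta$) using H\"older's inequality with $\|\tilde q-q\|_p\leq Q_p$ on $[0,\delta]$, and optimize at $\delta=(A/(\gamma Q_p))^{1/(1+\gamma)}$, $\gamma=(p-1)/p$. Your additional remark tracking why $C_0\nu$ stays bounded as $p\to1^+$ is a slightly more explicit justification of the paper's closing claim that $C(Q_1,Q_p)$ can be chosen independently of $p$.
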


\begin{proof}
Let $\gamma=(p-1)/p=6\nu$ and $0<\eta<1$. Lemma \ref{theorem:2}, with $C_1=0$ and $C_0$ determined by Corollary \ref{T:4.2}, implies the existence of a constant $C_2$, depending only on $Q_1$, such that
$$2|B(\eta,\eta)|\leq C_2 (p-1)^{-1/p}(1+Q_p)\frac{(\log R)^2}{\eta R^\nu} :=\frac{M}{\eta}$$
provided that $R$ is at least as large as $R_0$ given by Theorem \ref{T3.1}. By possibly enlarging $R_0$ we have $M<\gamma Q_p$ when $R\geq R_0$. (If $M$ is not much smaller than $\gamma Q_p$ our final estimate will be worse or not much better than the trivial estimate $\left|\int_x^1 (\tilde{q}-q)\right| \leq Q_p$).

Thus, if $0<\eta<1$ then, for all $x\in[0,1]$,
\begin{equation} \label{090328.3}
\left|\int_x^1 (\tilde{q}-q) \right| \leq \int_0^\eta |\tilde{q}-q| + \left|\int_\eta^1 (\tilde{q}-q)\right|
\leq Q_p \eta^{\gamma} +\frac{M}{\eta}
\end{equation}
according to equation (\ref{eq:12}). Substituting $\eta=(M/(\gamma Q_p))^{1/(\gamma+1)}<1$, the point where the best estimate occurs, into (\ref{090328.3}) gives the desired result. Note that $C(Q_1,Q_p)$ may be chosen, independently of $p$.
\end{proof}

\begin{Remark}
The proof of Theorem \ref{theorem:3} reveals that $C(Q_1,Q_p)=O(Q_p^{p/(2p-1)})$ as $Q_p$ tends to zero.
\end{Remark}

\section{Further errors from perturbation of the resonances} \label{section:8}
In the previous sections we considered a change of potential from $q$ to $\tilde{q}$ which preserved
all zeros of the Jost function in a disc of radius $R$. In the current section we allow the potentials
$q$ and $\tilde{q}$ to have different zeros $z_n$ and $\tilde{z}_n$ inside the disc of radius $R$,
satisfying a uniform bound:
\[ |z_n - \tilde{z}_n| \leq \epsilon , \;\;\; n = 1,2,\ldots N := N(R). \]
Since zeros of the Jost function in the upper half plane lie in a half disc centered at zero whose radius is bounded by a constant times $Q$
we may assume without loss of generality that none of the zeros $z_n$ or $\tilde{z}_n$ lie on the real
axis. If this is not true then the contours taken below for the inversions of the Fourier transforms
may be deformed around the zeros. We therefore assume that at every point $z$ on our inversion
contour, and for all $n$,
\[ |z-z_n| \geq 1, \;\;\; |z-\tilde{z}_n| \geq 1. \]
Define a function
\[ W(z) = \prod_{n=1}^{N(R)}\frac{(z-z_n)}{(z-\tilde{z}_n)}. \]
Then the Jost functions $\psi(z)$ and $\tilde{\psi}(z)$ satisfy
\[ \psi(z) = W(z)\tilde{\psi}(z)\frac{\mbox{e}^{g(z)}\Pi(R,z)}{\mbox{e}^{\tilde{g}(z)}\tilde{\Pi}(R,z)}
\]
in the notation of Section \ref{section:2}. We still have the equation
\[ K_q(0,t) - K_{\tilde{q}}(0,t) = \frac{1}{2\pi}\int_{\bb R}(\psi(z)-\tilde{\psi}(z))\exp(-izt)dz  =: I_{R}(t) + E_R(t),
\]
in which
\[ E_{R}(t) = \frac{1}{2\pi}\int_{|z|>R^{1/6}}(\psi(z)-\tp(z))\exp(-izt)dz, \]
\[ I_{R}(t) = \frac{1}{2\pi}\int_{-R^{1/6}}^{R^{1/6}}(\psi(z)-\tp(z))\exp(-izt)dz, \]
We also still have the estimate on $E_R$ from (\ref{eq:6.14}), (\ref{eq:6.15}) giving the existence of a constant $C_1$ depending only on $Q$ such that
\[ |E_{R}(t)| \leq C_1(1+\| q - \tilde{q}\|_p) \min\left(1,\frac{1}{t R^\nu}\right) \]
in which $\nu = (p-1)/(6p)$. We therefore turn to estimating $I_R(t)$, which we first write as
\[ I_R(t) = \frac{1}{2\pi}\int_{-R^{1/6}}^{R^{1/6}}(\psi(z)\left(1-
\frac{\mbox{e}^{\tilde{g}(z)}\tilde{\Pi}(R,z)}{\mbox{e}^{g(z)}\Pi(R,z)}\right)
 + \tilde{\psi}(z)(W(z)-1))\exp(-izt)dz. \]
Estimating the first term using Lemma \ref{L3.1} (part (1)) to bound $\psi(z)$, and
Lemmas \ref{L3.2}, \ref{L3.3} to bound $1-\frac{\mbox{e}^{\tilde{g}(z)}\tilde{\Pi}(R,z)}{\mbox{e}^{g(z)}\Pi(R,z)}$, we obtain
\[ |I_R(t)| \leq C_2 R^{-1/6} + \frac{1}{2\pi}\int_{-R^{1/6}}^{R^{1/6}}|\tp(z)||W(z)-1|dz \]
for some constant $C_2$ depending only on $Q$.
We now apply the inequality $|W-1|\leq |\log W|\exp(|\log W|)$ together with the bound $|\tp(z)|\leq
\kappa$ from Lemma \ref{L3.1} (part (1)) to obtain
\be |I_R(t)| \leq C_2 R^{-1/6} + \kappa\int_{-R^{1/6}}^{R^{1/6}}|\log W(z)|\exp(|\log W(z)|) dz. \label{eq:8.52}\ee
Since $|\log(1+x)|\leq -\log(1-|x|)\leq 2|x|$ whenever $|x|\leq 3/4$ we find
\begin{equation} \label{090330.1}
|\log W(z)|  \leq  \sum_{n=1}^{N(R)} \left|\log\left(1+\frac{\tilde{z}_n-z_n}{z-\tilde{z}_n}\right)\right|
\leq 2\sum_{n=1}^{N(R)} \left|\frac{\tilde{z}_n-z_n}{z-\tilde{z}_n}\right|
\end{equation}
provided the summands on the right are bounded by $3/4$. Denoting the smallest integer which is at least as large as $x$ by $\lceil x\rceil$ we distinguish now the cases $n\leq \lceil8{\rm e}R^{1/6}\rceil$ and $n>\lceil8{\rm e}R^{1/6}\rceil$. (Here we assume that $N(R)>\lceil8{\rm e}R^{1/6}\rceil$, since the case $N(R)\leq\lceil8{\rm e}R^{1/6}\rceil$ will give the same bounds a fortiori.) In the former case we use the assumption $|z-\tilde{z}_n| \geq 1$ to estimate a summand in (\ref{090330.1}) by $\epsilon$ which we assume to be less than $3/4$. In the latter case we observe that thanks to eqn. (\ref{090131.1}) with $\rho=2\kappa$, we have
\[ |z-\tilde{z}_n| \geq |\tilde z_n|-|z|\geq \frac{1}{2\mbox{e}}n-R^{1/6} - 12\kappa-\log(2\kappa)  \geq \frac{1}{4\e}n\]
if we require $R^{1/6}\geq 12\kappa+\log(2\kappa)$.
Hence, if $n>\lceil8{\rm e}R^{1/6}\rceil$ a summand in (\ref{090330.1}) is estimated by $4\e\epsilon/n$. Using (\ref{090131.1}) again we find
\[ |\log W(z)| \leq 16\e\epsilon R^{1/6} +8\e\epsilon \log(R)+\epsilon \leq 17\e \epsilon R^{1/6}. \]

Using this estimate for the argument of the exponential function in (\ref{eq:8.52}) we obtain
\be |I_R(t)| \leq C_2R^{-1/6} + \kappa\exp(17\e\epsilon R^{1/6})\int_{-R^{1/6}}^{R^{1/6}}|\log W(z)| dz. \label{eq:8.54}\ee
To estimate $\int |\log W|dz$ we stick to the bound in (\ref{090330.1}), i.e.,
\[ \int_{-R^{1/6}}^{R^{1/6}}|\log W(z)| dz \leq 2\epsilon(S_1+S_2), \]
where
\[ S_1 = \sum_{n=1}^{\lceil8{\rm e}R^{1/6}\rceil}\int_{-R^{1/6}}^{R^{1/6}}\frac{dz}{|z-\tilde z_n|}, \]
\[ S_2 = \sum_{n=\lceil8{\rm e}R^{1/6}\rceil+1}^{N(R)}\int_{-R^{1/6}}^{R^{1/6}}\frac{dz}{|z-\tilde z_n|}.
\]
In the sum $S_1$ we set $\xi=\Re(z-\tilde z_n)$ and observe that $|z-\tilde z_n|\geq\sqrt{1+\xi^2}$ with $|\xi|\leq2R$. Hence
\[ \int_{-R^{1/6}}^{R^{1/6}}\frac{dz}{|z-\tilde z_n|} \leq \int_{-2R}^{2R}\frac{d\xi}{\sqrt{1+\xi^2}} \]
giving
\be S_1 \leq c R^{1/6}\log(R) \label{eq:8.14}\ee
for some numerical constant $c$.
In the sum $S_2$ we use the same approach as before to obtain
\be S_2 \leq 16\e R^{1/6}\log(R). \label{eq:8.15}\ee
Combining (\ref{eq:8.54}), (\ref{eq:8.14}) and (\ref{eq:8.15}) we obtain
\[ |I_{R}(t)| \leq C_2R^{-1/6} + C_3\exp(17\e\epsilon R^{1/6})\epsilon R^{1/6}\log(R) \]
in which $C_2$ and $C_3$ depend only on $Q$. We thus obtain the total estimate:
\begin{multline*}
|K_{\tilde{q}}(0,t)-K_{q}(0,t)| \leq C\epsilon R^{1/6}\log(R) \exp(17 \e\epsilon R^{1/6}) \\ +C(p-1)^{-1/p}(1+\|\tilde{q}-{q}\|_p)\min\left(1,\frac{1}{tR^\nu}\right).
\end{multline*}
where $C$ depends only on $Q$. Upon possible enlarging $C$ equation (\ref{090328.2}) gives
\begin{multline*}
|B(0,t)| \leq C\epsilon R^{1/6}\log(R) \exp(17\e\epsilon R^{1/6}) \\ +C(p-1)^{-1/p}(1+\|\tilde{q}-{q}\|_p)\min\left(1,\frac{\log R}{tR^\nu}\right).
\end{multline*}

The final result follows from Lemma \ref{theorem:2} with $C_1= C\epsilon R^{1/6}\log(R) \exp(17\e\epsilon R^{1/6})$ and $C_0=C(p-1)^{-1/p}(1+\|\tilde{q}-{q}\|_p)\min\left(1,\frac{\log R}{tR^\nu}\right)$ using the same method proof as for Theorem \ref{theorem:3}:
\begin{thm}\label{theorem:4}
Let $Q_1$ and $Q_p$ be positive numbers and $p\in(1,2]$. Then there is a positive number $C$, depending only on $Q_1$ and $Q_p$, and a positive number $R_0$, depending on $Q_1$, $Q_p$, and $p$, so that the following statement is true for any $R\geq R_0$ and any $\epsilon\in(0,3/4)$. If $q$ and $\tilde{q}$ are two potentials in $B(Q_1)$ such that $\|\tilde{q}-q\|_p\leq Q_p$ and for which the respective zeros of the corresponding Jost functions are $\epsilon$-close in a disc of radius $R$, then
\begin{multline*}
\sup_{x\in [0,1]}\left|\int_x^1 (\tilde{q}-q)\right| \leq C(Q_1,Q_p) \Bigl\{(\log R)^{(2p-2)/(2p-1)}R^{-(p-1)^2/(6p(2p-1))}\\
+\epsilon R^{\frac{1}{6}}\log(R)\exp(17\e\epsilon R^{\frac{1}{6}})\Bigr\}.
\end{multline*}
\end{thm}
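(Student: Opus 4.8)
The plan is to feed the bound on $B(0,t)$ obtained just above the statement into Lemma~\ref{theorem:2}, and then to repeat, essentially verbatim, the optimization argument used to prove Theorem~\ref{theorem:3}. The key observation is that the estimate on $|B(0,t)|$ coming out of the perturbation analysis already has exactly the shape demanded by the hypothesis of Lemma~\ref{theorem:1}, namely $|B(0,t)|\le C_1+C_0\min(1,1/(tR_2))$, with the two error mechanisms cleanly separated: the perturbation of the zeros contributes the $t$-independent constant $C_1=C\epsilon R^{1/6}\log(R)\exp(17\e\epsilon R^{1/6})$, while the large-$z$ asymptotics of the Jost functions contribute $C_0=C(p-1)^{-1/p}(1+Q_p)$ together with the decay factor $\min(1,\log R/(tR^{\nu}))$ (so that effectively $R_2\sim R^{\nu}/\log R$).

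First I would invoke Lemma~\ref{theorem:2} to propagate this bound from the line $x=0$ throughout the triangle; restricting to the diagonal and using the identity $2B(x,x)=\int_x^1(\tilde q-q)$ from (\ref{eq:12}), one gets a pointwise control of the tail integral of the form
\[ 2|B(x,x)| \le (1+8Q\e^{2Q})\Bigl(C_1+C_0\frac{\log R_2}{2x\,R_2}\Bigr) =: C_1'+\frac{M}{x}, \]
where $M$ collects the $C_0$-terms and carries a $(\log R)^2$ factor (one power from the bound on $B(0,t)$ and one from the $\log R_2$ appearing in Lemma~\ref{theorem:2}). To pass from the diagonal bound to $\sup_x|\int_x^1(\tilde q-q)|$ I would split at a cut-off $\eta\in(0,1)$ exactly as in Theorem~\ref{theorem:3}: the portion of the integral lying to the right of $\eta$ is controlled by the diagonal bound, giving a contribution $\le C_1'+M/\eta$, while the portion over $[0,\eta]$ is handled by H\"older's inequality, $\int_0^\eta|\tilde q-q|\le Q_p\,\eta^{(p-1)/p}=Q_p\,\eta^{\gamma}$. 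This yields, with $\gamma=(p-1)/p$,
\[ \sup_{x\in[0,1]}\Bigl|\int_x^1(\tilde q-q)\Bigr| \le Q_p\,\eta^{\gamma}+\frac{M}{\eta}+C_1'. \]

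The last step is to optimize over $\eta$. The crucial structural point, and the only place needing a little care, is that $C_1'$ enters as a $t$-independent (hence $\eta$-independent) additive constant, so it decouples completely from the optimization and simply survives in the final bound, producing the second summand $\epsilon R^{1/6}\log(R)\exp(17\e\epsilon R^{1/6})$. Minimizing the remaining expression $Q_p\eta^{\gamma}+M/\eta$ at $\eta=(M/(\gamma Q_p))^{1/(\gamma+1)}$ reproduces, through the computation already carried out in Theorem~\ref{theorem:3} together with the substitution $M\sim(\log R)^2R^{-\nu}$ and $\nu=(p-1)/(6p)$, the first summand $(\log R)^{(2p-2)/(2p-1)}R^{-(p-1)^2/(6p(2p-1))}$. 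I expect no genuinely new obstacle at this stage: all of the hard analysis, namely the estimates on $E_R$, on $\log W$, and the resulting two-term bound on $B(0,t)$, has already been carried out above, and this final argument is purely the bookkeeping that combines the two error sources into the stated sum.
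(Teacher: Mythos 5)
Your proposal follows exactly the route the paper takes: it substitutes the two-term bound on $B(0,t)$ derived from the perturbation analysis into Lemma \ref{theorem:2} with $C_1=C\epsilon R^{1/6}\log(R)\exp(17\e\epsilon R^{1/6})$ and the $C_0$-term carrying the $\min(1,\log R/(tR^\nu))$ decay, and then repeats the $\eta$-optimization of Theorem \ref{theorem:3}, with the $t$-independent $C_1$-contribution surviving additively as the second summand. The exponent bookkeeping ($\gamma/(\gamma+1)=(p-1)/(2p-1)$ applied to $M\sim(\log R)^2R^{-\nu}$) is correct, so this is a faithful and correctly filled-in version of the paper's own (very terse) argument.
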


\begin{cor} \label{C:6.2} (Conditional stability). Let $q$ and $\tilde{q}$ be two potentials with support in $[0,1]$. Let $\int_0^1|q|^p<Q_p$ and $\int_0^1|\tilde{q}|^p<Q_p$ for some $p>1$. Then for any $\delta>0$ there exists
a pair $(\epsilon,R)$, depending only on $\delta$, $Q_p$, and $p$, such that if the corresponding Jost functions have zeros differing by at most $\epsilon$  in a disc of radius $R$ then
$$\sup\limits_{x\in[0,1]}\left|\int_x^1(\tilde{q}-q)\right|\leq\delta.$$
\end{cor}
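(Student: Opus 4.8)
The plan is to deduce the corollary directly from Theorem \ref{theorem:4} by a two-stage choice of the parameters $R$ and $\epsilon$, after first reconciling the hypotheses. Theorem \ref{theorem:4} is stated for $p\in(1,2]$ and asks for membership in a fixed $L^1$-ball together with a bound on $\|\tilde q-q\|_p$, whereas here we control only $\int_0^1|q|^p$ and $\int_0^1|\tilde q|^p$ separately. If $p>2$ I would first replace $p$ by $2$: since the underlying interval has length one, Hölder's inequality gives $\int_0^1|q|^2\leq(\int_0^1|q|^p)^{2/p}<Q_p^{2/p}$, so the $L^2$ data are controlled by $Q_p$ and $p$ alone and we may assume $p\in(1,2]$. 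Again by Hölder, $\|q\|_1\leq\|q\|_p<Q_p^{1/p}$ and likewise for $\tilde q$, so both potentials lie in $B(Q_p^{1/p})$; the triangle inequality then yields $\|\tilde q-q\|_p<2Q_p^{1/p}$. Thus Theorem \ref{theorem:4} applies with its $Q_1:=Q_p^{1/p}$ and with its bound on $\|\tilde q-q\|_p$ taken to be $2Q_p^{1/p}$, both depending only on $Q_p$ and $p$; consequently the constant $C$ and threshold $R_0$ it furnishes depend only on $Q_p$ and $p$.

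With this in hand, Theorem \ref{theorem:4} bounds $\sup_x|\int_x^1(\tilde q-q)|$ by $C(T_1(R)+T_2(R,\epsilon))$ for every $R\geq R_0$ and every $\epsilon\in(0,3/4)$, where $T_1(R)=(\log R)^{(2p-2)/(2p-1)}R^{-(p-1)^2/(6p(2p-1))}$ and $T_2(R,\epsilon)=\epsilon R^{1/6}\log(R)\exp(17\e\epsilon R^{1/6})$. The idea is to make each summand at most $\delta/2$. Since $(p-1)^2/(6p(2p-1))>0$, the negative power of $R$ in $T_1$ decays faster than any logarithm grows, so $T_1(R)\to 0$ as $R\to\infty$; hence I would first fix $R=R_1\geq R_0$, depending only on $\delta$, $Q_p$, and $p$, so that $CT_1(R_1)\leq\delta/2$. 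With $R_1$ now frozen, the map $\epsilon\mapsto T_2(R_1,\epsilon)$ is continuous and increasing with limit $0$ as $\epsilon\to0^+$, so I can then choose $\epsilon=\epsilon_1\in(0,3/4)$, again depending only on $\delta$, $Q_p$, and $p$ through $R_1$, with $CT_2(R_1,\epsilon_1)\leq\delta/2$. The pair $(\epsilon_1,R_1)$ delivers the claimed bound $\delta$.

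The only point requiring care --- and the reason the argument must proceed in this order --- is that the two error terms cannot be sent to zero independently. Accuracy in the genuine transformation-kernel estimate demands $R$ large, but for fixed $\epsilon$ the resonance-perturbation term $T_2$ grows like $R^{1/6}\log R\,\e^{17\e\epsilon R^{1/6}}$ and in fact blows up as $R\to\infty$. One therefore cannot take a joint limit; instead $R$ must be selected first, purely to control $T_1$, and only afterwards is $\epsilon$ taken small enough --- depending on the already-fixed $R$ --- to absorb the now-magnified coefficient of $\epsilon$ in $T_2$. Apart from this coupling, the corollary is a direct quantitative read-off from Theorem \ref{theorem:4}.
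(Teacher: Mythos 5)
Your proposal is correct and follows exactly the route the paper intends: the corollary is stated as an immediate consequence of Theorem \ref{theorem:4}, obtained by first choosing $R$ large enough to control the $(\log R)^{(2p-2)/(2p-1)}R^{-(p-1)^2/(6p(2p-1))}$ term and then choosing $\epsilon$ small enough (for that fixed $R$) to control the $\epsilon R^{1/6}\log(R)\exp(17\e\epsilon R^{1/6})$ term, with the $L^p$ hypotheses converted to the $B(Q_1)$ and $\|\tilde q-q\|_p$ hypotheses via H\"older on the unit interval. Your observation about the necessary order of selection (since the second term blows up in $R$ for fixed $\epsilon$) is exactly the right point of care.
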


Theorem \ref{theorem:4} and Remark \ref{remark:1}also imply the next corollary.
\begin{cor} \label{C:6.3} (Uniqueness)
Let $q$ and $\tilde{q}$ be two integrable potentials with support in $[0,1]$. If all eigenvalues and resonances for one potential coincide with those of the other, then $q$ equals $\tilde q$ almost everywhere.
\end{cor}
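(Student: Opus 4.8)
The plan is to prove that the transformation kernel $B$ vanishes identically and then read off $q=\tilde q$ from (\ref{eq:12}). By the dichotomy of Section \ref{section:1}, the eigenvalues and resonances of $H_0(q)$ are exactly the squares of the zeros of $\psi$, the half-plane in which a zero lies determining which square root of the given value it is, and similarly for $\tilde\psi$. Hence the hypothesis that all eigenvalues and all resonances coincide is to be read as the statement that $\psi$ and $\tilde\psi$ have the same zeros throughout $\bb C$. In particular the zeros agree inside every disc $|z|<R$, so for each fixed $R$ we are in the unperturbed situation $\epsilon=0$ analyzed earlier.

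First I would set $Q=\max(\|q\|_1,\|\tilde q\|_1)$, so that $q,\tilde q\in B(Q)$ and all earlier estimates apply with this $Q$. The crucial observation is that $(K_{\tilde q}-K_q)(0,t)$ is a fixed function of $t$, independent of $R$. Splitting it as in (\ref{eq:6.2}), Theorem \ref{T3.1} bounds the low-frequency part by $CR^{-1/6}$ as in (\ref{eq:6.3}), which tends to zero as $R\to\infty$. It therefore suffices to show that the high-frequency tail $E_R(t)$ of (\ref{eq:6.4}) also tends to zero as $R\to\infty$ for each fixed $t\in(0,2)$; letting $R\to\infty$ then forces $(K_{\tilde q}-K_q)(0,t)=0$.

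The estimate on $E_R$ is the heart of the matter and the only place where the bare $L^1$ hypothesis, rather than an $L^p$ bound with $p>1$, must be used. I would start from (\ref{eq:6.12}), in which $\tilde g-g=(\tilde q-q)(\cdot/2)-4\bigl(\tilde H_t(0,\cdot)-H_t(0,\cdot)\bigr)$. The contribution of the $(K_{\tilde q}-K_q)(0,0)$ term is $O\bigl((tR^{1/6})^{-1}\bigr)$ by (\ref{eq:6.14}), hence vanishes for fixed $t>0$. For the remaining integral I would treat the two pieces of $\widehat{(\tilde g-g)}$ separately. The $H_t$-piece is handled by Cauchy--Schwarz: since $\tilde H_t(0,\cdot)-H_t(0,\cdot)$ is bounded by (\ref{eq:Htbound}) and compactly supported it lies in $L^2$, so Plancherel together with $\int_{|z|>R^{1/6}}z^{-2}\,dz=2R^{-1/6}$ gives a bound of order $R^{-1/12}$. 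The potential-piece, after the change of variable that doubles the frequency, is precisely the integral $F_R$ treated in Remark \ref{remark:1} applied to the integrable function $\tilde q-q$, and that remark shows it tends to zero. Hence $E_R(t)\to0$, and the main obstacle is overcome.

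Having obtained $(K_{\tilde q}-K_q)(0,t)=0$ for all $t\in(0,2)$, equation (\ref{090328.2}) gives $B(0,t)=0$ for all such $t$. The iteration (\ref{eq:29}) (equivalently Lemma \ref{theorem:1} with $C_0=C_1=0$) then forces every $B_n$, and hence $B$ itself, to vanish. Finally (\ref{eq:12}) yields $\int_x^1(\tilde q-q)=2B(x,x)=0$ for every $x\in[0,1]$; since this absolutely continuous function of $x$ is identically zero, its derivative $\tilde q-q$ vanishes almost everywhere, so $q=\tilde q$ a.e.
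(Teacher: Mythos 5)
Your proof is correct and follows essentially the same route the paper intends: the authors prove this corollary by invoking Theorem \ref{theorem:4} with $\epsilon=0$ together with Remark \ref{remark:1} to dispose of the high-frequency tail under the bare $L^1$ hypothesis, which is exactly the argument you have written out in detail (your Plancherel treatment of the $H_t$-difference is a harmless, valid way to handle that piece). Nothing essential is missing.
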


\begin{Remark} Taking into account Remark \ref{remark:1}, one can prove conditional stability for $q$, $\tilde{q}$ just from $B(Q)$. But in this case the radius $R$ can not be chosen uniformly. Indeed, consider the sequence of potentials $q_n(x):=n\chi_{[0,1/n]}(x)$ where $\chi$ is the characteristic function. Then for the corresponding Jost functions $\psi(z;n)$ we have
$$
\psi(z;n)=e^{iz/n}\left(\cos\left(\frac{\sqrt{z^2-n}}{n}\right)-i\frac{z}{\sqrt{z^2-n}}\sin\left(\frac{\sqrt{z^2-n}}{n}\right)\right).
$$
Obviously, for any $R$ there are no zeros of $\psi(z;n)$ in the disc $|z|<R$ if $n=n(R)$ is sufficiently large. Thus, one cannot choose $R$ to depend only on $\delta$ and $Q$. At the same time, in the same way as in the proof of the estimate (\ref{eq:34b}) one can still get uniformity for the class of potentials from $B(Q)$ if their behavior as $x\to0$ is specified. Namely, if we assume that $q,\,\tilde{q}\in B(Q)$ and
$$
\int_0^\epsilon(|\tilde{q}|+|q|)\leq \eta(\epsilon),
$$
where $\eta(\epsilon)\to0$ as $\epsilon\to0$, then $R$ can be chosen to depend only on $\delta$, $Q$, and the function $\eta$.
\end{Remark}

\subsection*{Acknowledgment:} The authors wish to thank Sergey Naboko for many discussions on the topic.


\begin{thebibliography}{10}

\bibitem{MR2277530}
Alan~L. Andrew.
\newblock Computing {S}turm-{L}iouville potentials from two spectra.
\newblock {\em Inverse Problems}, 22(6):2069--2081, 2006.

\bibitem{MR1994689}
B.~M. Brown, I.~Knowles, and R.~Weikard.
\newblock On the inverse resonance problem.
\newblock {\em J. London Math. Soc. (2)}, 68(2):383--401, 2003.

\bibitem{MR1964261}
B.~M. Brown, V.~S. Samko, I.~W. Knowles, and M.~Marletta.
\newblock Inverse spectral problem for the {S}turm-{L}iouville equation.
\newblock {\em Inverse Problems}, 19(1):235--252, 2003.

\bibitem{MR2065435}
B.~M. Brown and R.~Weikard.
\newblock The inverse resonance problem for perturbations of algebro-geometric
  potentials.
\newblock {\em Inverse Problems}, 20(2):481--494, 2004.

\bibitem{MR985100}
K.~Chadan and P.~C. Sabatier.
\newblock {\em Inverse problems in quantum scattering theory}.
\newblock Texts and Monographs in Physics. Springer-Verlag, New York, second
  edition, 1989.
\newblock With a foreword by R. G. Newton.

\bibitem{MR505878}
Ole~H. Hald.
\newblock The inverse {S}turm-{L}iouville problem with symmetric potentials.
\newblock {\em Acta Math.}, 141(3-4):263--291, 1978.

\bibitem{MR1759798}
Michael Hitrik.
\newblock Stability of an inverse problem in potential scattering on the real
  line.
\newblock {\em Comm. Partial Differential Equations}, 25(5-6):925--955, 2000.

\bibitem{MR2047740}
Evgeni Korotyaev.
\newblock Inverse resonance scattering on the half line.
\newblock {\em Asymptot. Anal.}, 37(3-4):215--226, 2004.

\bibitem{MR2104289}
Evgeni Korotyaev.
\newblock Stability for inverse resonance problem.
\newblock {\em Int. Math. Res. Not.}, (73):3927--3936, 2004.

\bibitem{MR933088}
B.~M. Levitan.
\newblock {\em Inverse {S}turm-{L}iouville problems}.
\newblock VSP, Zeist, 1987.
\newblock Translated from the Russian by O. Efimov.

\bibitem{MR897106}
Vladimir~A. Marchenko.
\newblock {\em Sturm-{L}iouville operators and applications}, volume~22 of {\em
  Operator Theory: Advances and Applications}.
\newblock Birkh\"auser Verlag, Basel, 1986.
\newblock Translated from the Russian by A. Iacob.

\bibitem{MR2158108}
Marco Marletta and Rudi Weikard.
\newblock Weak stability for an inverse {S}turm-{L}iouville problem with finite
  spectral data and complex potential.
\newblock {\em Inverse Problems}, 21(4):1275--1290, 2005.

\bibitem{MR2348728}
Marco Marletta and Rudi Weikard.
\newblock Stability for the inverse resonance problem for a {J}acobi operator
  with complex potential.
\newblock {\em Inverse Problems}, 23(4):1677--1688, 2007.

\bibitem{MR0262880}
M.~A. Na{\u\i}mark.
\newblock {\em Linear differential operators. {P}art {II}: {L}inear
  differential operators in {H}ilbert space}.
\newblock With additional material by the author, and a supplement by V. \`E.
  Ljance. Translated from the Russian by E. R. Dawson. English translation
  edited by W. N. Everitt. Frederick Ungar Publishing Co., New York, 1968.

\bibitem{MR1947260}
Roger~G. Newton.
\newblock {\em Scattering theory of waves and particles}.
\newblock Dover Publications Inc., Mineola, NY, 2002.
\newblock Reprint of the 1982 second edition [Springer, New York; MR0666397
  (84f:81001)], with list of errata prepared for this edition by the author.

\bibitem{MR85c:65097}
John Paine.
\newblock A numerical method for the inverse {S}turm-{L}iouville problem.
\newblock {\em SIAM J. Sci. Statist. Comput.}, 5(1):149--156, 1984.

\bibitem{MR2183664}
Norbert R{\"o}hrl.
\newblock A least-squares functional for solving inverse {S}turm-{L}iouville
  problems.
\newblock {\em Inverse Problems}, 21(6):2009--2017, 2005.

\bibitem{MR92h:34034}
William Rundell and Paul~E. Sacks.
\newblock Reconstruction techniques for classical inverse {S}turm-{L}iouville
  problems.
\newblock {\em Math. Comp.}, 58(197):161--183, 1992.

\bibitem{56.0752.06}
V.~Wei{\ss}kopf and E.~Wigner.
\newblock {Berechnung der nat\"urlichen Linienbreite auf Grund der Diracschen
  Lichttheorie}.
\newblock {\em Z. f. Physik}, 63:54--73, 1930.

\bibitem{MR2000d:58051}
Maciej Zworski.
\newblock Resonances in physics and geometry.
\newblock {\em Notices Amer. Math. Soc.}, 46(3):319--328, 1999.

\bibitem{MR1856251}
Maciej Zworski.
\newblock A remark on isopolar potentials.
\newblock {\em SIAM J. Math. Anal.}, 32(6):1324--1326 (electronic), 2001.

\end{thebibliography}

\end{document}